\tikzset{
  invisible/.style={opacity=0},
  visible on/.style={alt={#1{}{invisible}}},
  alt/.code args={<#1>#2#3}{%
    \alt<#1>{\pgfkeysalso{#2}}{\pgfkeysalso{#3}} 
  },
}
\colorlet{shadecolor}{gray!12}
\newcommand{\gr} {\mbox{Digraph}}
\newcommand{\g} {\mbox{digraph}}
\newenvironment{desctight}
  {\begin{list}{}{\setlength\labelwidth{0pt}
        \setlength{\itemsep}{0.5pt}
        \setlength{\parsep}{0pt}
        \setlength\itemindent{-\leftmargin}
        }}
    {\end{list}}
\newtheorem{theorem}{Theorem}[section]
\newtheorem{example}[theorem]{Example}
\newtheorem{corollary}[theorem]{Corollary}
\newtheorem{lemma}[theorem]{Lemma}
\newtheorem{observation}[theorem]{Observation}
\newtheorem{definition}[theorem]{Definition}
\newtheorem{remark}[theorem]{Remark}
\newtheorem{proposition}[theorem]{Proposition}
\newenvironment{proof}{\noindent{\bf Proof~}}{\null\hfill $\Box$\par\medskip}
\newcommand{\bigo}{\text{$\mathcal O$}}
\newcommand{\ideg}{\text{indegree}}
\newcommand{\odeg}{\text{outdegree}}
\newcommand{\un} {{\it un}}
\newcommand{\CN} {\text{CN}}
\newcommand{\OCN} {\text{OCN}}
\newcommand{\OCI} {\text{OCI}}
\begin{document}

\title{Efficient computation of oriented vertex and arc colorings of special digraphs\thanks{A short version of this paper will appear in the Proceedings of the {\em International Conference on Operations Research} (OR 2021) \cite{LGK21}.}}

\author[1]{Frank Gurski}
\author[1]{Dominique Komander}
\author[1]{Marvin Lindemann}

\affil[1]{\small University of  D\"usseldorf,
Institute of Computer Science, Algorithmics for Hard Problems Group,
40225 D\"usseldorf, Germany}

\maketitle


\begin{abstract}
In this paper we study the oriented vertex and arc coloring problem on edge
series-parallel digraphs (esp-digraphs) which are related to the well known series-parallel graphs.
Series-parallel graphs
are graphs with two distinguished vertices called terminals,
formed recursively by parallel and series composition. These graphs have applications
in  modeling series and parallel electric circuits and also
play an important role in theoretical computer science.
The oriented class of series-parallel digraphs is recursively defined from pairs of vertices
connected by a single arc and applying the parallel and series
composition, which leads to
specific orientations of undirected series-parallel graphs. 
Further we consider the line digraphs of edge
series-parallel digraphs, which are known as minimal series-parallel digraphs (msp-digraphs).

We show tight upper bounds for the oriented chromatic number
and the oriented chromatic index of edge series-parallel digraphs and 
minimal series-parallel digraphs.
Furthermore, we introduce  first linear time solutions for 
computing the  oriented chromatic number of edge series-parallel digraphs and the  oriented chromatic index of minimal series-parallel digraphs.

\bigskip
\noindent
{\bf Keywords:} 
Edge series-parallel digraphs;
Minimal series-parallel digraphs;
Oriented vertex-coloring;
Oriented arc-coloring;
Linear time solutions
\end{abstract}

\section{Introduction}

A homomorphism from an oriented graph $G=(V_G,E_G)$ to an oriented graph $H=(V_H,E_H)$
is an arc preserving
mapping $h$ from $V_G$ to $V_H$, i.e. if $(u,v)\in E_G$ then  $(h(u),h(v))\in E_H$.

An {\em oriented $r$-vertex-coloring} of an oriented graph $G$ corresponds to an oriented graph $H$
on $r$ vertices, such that there exists a homomorphism from $G$ to $H$.
The {\em oriented chromatic number} of $G$, denoted by $\chi_o(G)$, is the minimum number of vertices in an
oriented graph $H$ such that there is a homomorphism from $G$ to $H$.

In the Oriented Chromatic Number problem ($\OCN$ for short) there is given an
oriented graph $G$ and an integer $r$ and we have to
decide whether there is an  oriented $r$-vertex-coloring for $G$.
If $r$ is constant, i.e.\ not part of the input, the corresponding problem
is denoted by $\OCN_{r}$. Even $\OCN_{4}$ is NP-complete~\cite{CD06}.

Moreover, an {\em oriented $r$-arc-coloring} of an oriented graph $G$ relates
to an oriented graph $H$
on $r$ vertices, such that there is a homomorphism from line digraph $LD(G)$ to $H$.
The {\em oriented chromatic index} of $G$, denoted by $\chi'_o(G)$, is the minimum number of vertices in an
oriented graph $H$ such that there is a homomorphism from line digraph $LD(G)$ to $H$.

In the Oriented Chromatic Index problem ($\OCI$ for short) there is given an
oriented graph $G$ and an integer $r$ and we have to
decide whether there is an  oriented $r$-arc-coloring for $G$.
If $r$ is constant, i.e.\ not part of the input, the corresponding problem
is denoted by $\OCI_{r}$. Even $\OCI_{4}$ is NP-complete \cite{OPS08}.

The hardness of $\OCN_{4}$ and  $\OCI_{4}$ motivates to consider the
oriented chromatic number and the oriented chromatic index of special graph classes.
This was frequently done
for undirected graphs, where the
maximum value $\chi_o(G')$ or  $\chi'_o(G')$ of all possible orientations $G'$ of a graph
$G$ is considered, see \cite{DS14,Mar13,Mar15,OP14,Sop97} and \cite{OPS08,PS06}.
In this sense it has been shown in \cite{Sop97} that every
series-parallel graph has oriented chromatic number at most $7$ and that this bound is tight.
%
%
Since the oriented chromatic index is always less or equal the oriented chromatic 
number (Observation \ref{obs-ind-chr}) every series-parallel graph has 
oriented chromatic index at most $7$ and by \cite{PS06} this bound is also tight.

In this paper we consider the oriented chromatic number and the oriented chromatic 
index of esp-digraphs (short for edge series-parallel digraphs) which can recursively be defined from
the single edge graph by applying the  parallel composition and series composition.
and lead to specific orientation of  series-parallel graphs.
%
Further, we consider the oriented chromatic number and the oriented chromatic index of line digraphs of edge
series-parallel digraphs, which are known as msp-digraphs
(short for minimal series-parallel digraphs) and can recursively
be defined from
the single vertex graph by applying the  parallel composition and series composition.
The classes of msp-digraphs and esp-digraphs are incomparable in terms of  
set inclusion (Remark \ref{rem-inclu}).


As every  esp-digraph is a specific orientation of a series-parallel graph
the  mentioned bounds for  series-parallel graphs
lead to (not necessarily tight) upper bounds for the oriented chromatic number
and the  oriented chromatic index of esp-digraphs.

In this paper we re-prove the bound of $7$ for the oriented chromatic number
and the oriented chromatic index of esp-digraphs and we show
that these bounds are tight even for esp-digraphs. Since the oriented
chromatic index of esp-digraphs equals the oriented chromatic number of
their line digraphs, namely msp-digraphs, we obtain a tight upper bound
of $7$ for the oriented chromatic number of msp-digraphs. Further, (by Observation \ref{obs-ind-chr}) 
this leads to an upper bound of $7$ for the oriented chromatic index of msp-digraphs. 
We give an example that this bound is best possible.

We also consider solutions for computing the
oriented chromatic number and the oriented chromatic index of esp-digraphs and msp-digraphs.
In \cite{GKL20} we gave a first linear time solution for computing the  oriented chromatic number
of msp-digraphs. Using the line digraphs this leads to a linear time solution for computing the  oriented chromatic index of esp-digraphs. In this paper we introduce  linear time solutions for 
computing the  oriented chromatic number of esp-digraphs and the   oriented chromatic index of 
msp-digraphs.



In Tables \ref{tab1} and \ref{tab2}
we summarize these results.

\begin{table}
\begin{center}
\begin{tabular}{|c|l|l|ll|}
\hline
   $G$           &  $\chi_o(G)\leq 7$ &  sharpness &  \multicolumn{2}{c|}{recognition}   \\
\hline
esp-digraph & Theorem \ref{esp-bd7}   or Corollary \ref{cor-78}   &     Example \ref{ex-six}              &  $\bigo(n+m)$ & Theorem    \ref{esp-num-com} \\
\hline
msp-digraph &  \cite{GKL20}  or  Remark \ref{rem-msp-chi}              &  Example \ref{ex-sixa}   & $\bigo(n+m)$ & \cite{GKL20}\\
\hline
\end{tabular}
\end{center}

\caption{Oriented chromatic number $\chi_o$  of msp-digraphs and esp-digraphs}
\label{tab1}
\end{table}

\begin{table}
\begin{center}
\begin{tabular}{|c|l|l|ll|}
\hline
     $G$         & $\chi'_o(G)\leq 7$ &  sharpness &  \multicolumn{2}{c|}{recognition}   \\
     \hline
esp-digraph &  Remark \ref{rem-esp-chiprime}   or Corollary \ref{cor-77}                                 & Example \ref{ex-six2}  &   $\bigo(n+m)$  &  Corollary \ref{esp-index-com} \\ 
\hline     
msp-digraph &  Corollary \ref{cor-msp-ind}   &   Example \ref{ex-msp-6} &    $\bigo(n+m)$  &  Theorem \ref{msp-ori-a-c}   \\
\hline
\end{tabular}
\end{center}

\caption{Oriented chromatic index $\chi'_o$ of msp-digraphs and esp-digraphs}
\label{tab2}
\end{table}

\section{Preliminaries}\label{intro}

\subsection{Graphs and digraphs}

We refer to the notations of Bang-Jensen and Gutin \cite{BG09} for (di)graphs.
A {\em graph} is a pair  $G=(V,E)$ with
a finite set $V$ of {\em vertices}
and a finite set of {\em edges} $E \subseteq \{ \{u,v\} \mid u,v \in
V,~u \not= v\}$.
We call a pair  $G=(V,E)$ {\em directed graph} or {\em digraph}, such that $V$ is
a finite set of {\em vertices} and
$E\subseteq \{(u,v) \mid u,v \in V,~u \not= v\}$ is a finite set of ordered pairs of distinct
vertices called {\em arcs} or {\em directed edges}.
For a vertex $v\in V$, we define the sets $N^+(v)=\{u\in V \mid (v,u)\in E\}$ and
$N^-(v)=\{u\in V \mid (u,v)\in E\}$ as the {\em set of all successors}
and the {\em set of all  predecessors} of vertex $v$.
The  {\em outdegree} of $v$, $\odeg(v)$ for short, is the number
of successors of $v$ and analogously the  {\em indegree} of $v$, $\ideg(v)$ for short,
is the number of predecessors of $v$.

Digraph $G'=(V',E')$ is called a {\em subdigraph} of digraph $G=(V,E)$ if $V'\subseteq V$
and $E'\subseteq E$ holds. Moreover, $G'$ is an {\em induced subdigraph} of $G$,
denoted by $G'=G[V']$ if every arc of $E$ with both end vertices in $V'$ exists in $E'$.

For a digraph $G=(V,E)$ its {\em underlying undirected graph} is defined by disregarding the directions of the arcs, i.e., $\un(G)=(V,\{\{u,v\} \mid (u,v)\in E, u,v\in V\})$.

For an undirected graph $G=(V,E)$ replacing
every edge $\{u,v\}$ of $G$ by exactly one of the arcs $(u,v)$ and $(v,u)$ leads to an {\em orientation} of $G$.
Every digraph that can be obtained by an orientation of an undirected
graph $G$ is called an {\em oriented graph}, i.e.,
an oriented graph is a digraph without loops or opposite arcs.

A {\em tournament} is a digraph with exactly one arc between every two distinct vertices.
A digraph in which there are no directed cycles is called {\em directed acyclic graph (DAG for short)}.
The {\em girth} of digraph $G$ is defined by the length (number of arcs) of a shortest directed cycle in $G$.
For a DAG $G$ the girth is defined to be infinity.

The {\em line digraph} $LD(G)$ of digraph $G$ has a vertex for every
arc in $G$ and an arc from $u$ to $v$ if and only if $u=(x,y)$ and $v=(y,z)$ for vertices $x,y,z$ from $G$
\cite{HN60}. We call digraph $G$ the {\em root digraph} of $LD(G)$.

\subsection{Undirected vertex-colorings}

\begin{definition}[Vertex-coloring]
An \emph{$r$-coloring} of a graph $G=(V,E)$  is a mapping $c:V\to \{1,\ldots,r\}$
such that:
\begin{itemize}
	\item $c(u)\neq c(v)$ for every $\{u,v\}\in E$.
\end{itemize}
The  {\em chromatic number} of $G$, denoted by $\chi(G)$, is the smallest $r$
such that $G$ has an $r$-coloring.
\end{definition}

We consider the following problem.

\begin{desctight}
\item[Name] Chromatic Number (CN)
\item[Instance] A graph $G=(V,E)$ and a positive integer $r \leq |V|$.
\item[Question] Is there an $r$-coloring for $G$?
\end{desctight}

If $r$ is a constant and not part of the input, the corresponding problem
is denoted by $r$-Chromatic Number ($\CN_{r}$).
Even on 4-regular planar graphs $\CN_{3}$ is NP-complete \cite{Dai80}.

It is well known that bipartite graphs are exactly the
graphs which allow a 2-coloring and that planar graphs are graphs that
allow a 4-coloring.
On undirected co-graphs, the graph coloring problem can be solved
in linear time \cite{CLS81}.

\subsection{Oriented vertex-colorings}

In 1994 Courcelle \cite{Cou94} introduced oriented graph coloring,
which considers only oriented graphs.

\begin{definition}[Oriented vertex-coloring \cite{Cou94}]\label{def-oc}
An \emph{oriented $r$-vertex-coloring} of an oriented graph $G=(V,E)$ is a mapping $c:V\to \{1,\ldots,r\}$
such that:
\begin{itemize}
	\item $c(u)\neq c(v)$ for every $(u,v)\in E$,
	\item $c(u)\neq c(y)$ for every two arcs $(u,v)\in E$ and $(x,y)\in E$ with $c(v)=c(x)$.
\end{itemize}
The {\em oriented chromatic number} of $G$, denoted by $\chi_o(G)$, is the smallest $r$
such that there exists an oriented $r$-vertex-coloring for $G$.
Then $V_i=\{v\in V\mid c(v)=i\}$, $1\leq i\leq r$ is a partition of $V$, which we call {\em color classes}.
\end{definition}

For two  digraphs $G_1=(V_1,E_1)$ and $G_2=(V_2,E_2)$, a {\em homomorphism}
from $G_1$ to $G_2$ is a mapping $h: V_1 \to V_2$, which preserves the edges,
i.e., $(u,v) \in E_1$ implies $(h(u),h(v)) \in E_2$.

A homomorphism from $G_1$ to $G_2$ can be regarded as an oriented coloring of $G_1$
in which the vertices of $G_2$ can be seen as color classes. Thus, we call $G_2$ the {\em color graph} of $G_1$.
This leads to equivalent definitions for the
oriented coloring and the oriented chromatic number.
There is an oriented $r$-vertex-coloring of an oriented graph $G_1$
if and only if there is a homomorphism from $G_1$ to some oriented graph $G_2$ on $r$ vertices.
Then, the oriented chromatic number of $G_1$ is the minimum number of vertices in an
oriented graph $G_2$ such that there is a homomorphism from $G_1$ to $G_2$.
Clearly, it is possible to choose $G_2$ as a tournament.

\begin{observation}[\cite{GKL20}]\label{obs-low}
For every oriented graph $G$ it holds that
$\chi(\un(G))\leq \chi_o(G)$.
\end{observation}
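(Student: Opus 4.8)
The plan is to show that any oriented $r$-vertex-coloring of $G$ yields an ordinary $r$-coloring of the underlying undirected graph $\un(G)$, which immediately gives $\chi(\un(G)) \le \chi_o(G)$. So I would start with an oriented $r$-vertex-coloring $c : V \to \{1,\ldots,r\}$ of $G$ witnessing $r = \chi_o(G)$, and simply take $c$ itself as the candidate coloring of $\un(G)$.

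The one thing to verify is that $c$ is a proper coloring of $\un(G) = (V, \{\{u,v\} \mid (u,v)\in E\})$. Let $\{u,v\}$ be an arbitrary edge of $\un(G)$. By definition of $\un(G)$, either $(u,v)\in E$ or $(v,u)\in E$. In the first case the first condition of Definition~\ref{def-oc} (applied to the arc $(u,v)$) gives $c(u)\neq c(v)$; in the second case the same condition applied to the arc $(v,u)$ gives $c(v)\neq c(u)$. Either way $c(u)\neq c(v)$, so $c$ is a proper $r$-coloring of $\un(G)$.

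Hence $\un(G)$ admits an $r$-coloring with $r = \chi_o(G)$, and by definition of the chromatic number $\chi(\un(G)) \le r = \chi_o(G)$. There is essentially no obstacle here: the only subtlety is the (trivial) case distinction on the direction of each arc underlying an edge of $\un(G)$, and the second condition in the definition of oriented coloring plays no role in this direction of the inequality. I would keep the argument to these few lines.
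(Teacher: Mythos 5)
Your argument is correct: the first condition of Definition~\ref{def-oc} alone already makes any oriented coloring a proper coloring of $\un(G)$, which is exactly the standard (and intended) proof of Observation~\ref{obs-low}; the paper itself only cites this fact from \cite{GKL20} without reproducing the argument. Nothing further is needed.
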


However, it is not possible to bound the oriented chromatic number
of an oriented graph $G$ by a function of the (undirected)
chromatic number of $\un(G)$. This has been shown in \cite[Section 3]{Sop16}
by an orientation $K'_{n,n}$ of a $K_{n,n}$ satisfying $\chi_o(K'_{n,n})=2n$ but
$\chi(\un(K'_{n,n}))=2$.


We now introduce the Oriented Chromatic Number problem.

\begin{desctight}
\item[Name]      Oriented Chromatic Number ($\OCN$)
\item[Given]     An oriented graph $G=(V,E)$ and a positive integer $r \leq |V|$.
\item[Question]  Is there an oriented $r$-vertex-coloring for $G$?
\end{desctight}

\medskip
If $r$ is not part of the input but a constant, we call the related problem the $r$-Oriented Chromatic Number ($\OCN_{r}$).
If  $r\leq 3$, we can decide $\OCN_{r}$ in polynomial time, but still $\OCN_{4}$ is NP-complete \cite{KMG04}. 
Moreover, $\OCN_{4}$ is NP-complete for several restricted
classes of digraphs, e.g., for DAGs \cite{CD06}, line digraphs \cite{OPS08}, and bipartite planar
digraphs with large girth \cite{GO15}.


On the other hand,  for every class of graphs of bounded  directed clique-width and every integer $r$ 
the $r$-Oriented Chromatic Number problem can be solved in polynomial time \cite{GKL21a}.
Further, for every oriented co-graph the  Oriented Chromatic Number problem can  be  solved in linear time \cite{GKR19d}.
The latter result even holds for the
super class of all transitive acyclic graphs \cite{GKL20,GKL21a}.
Moreover, for the class of minimal series-parallel digraphs the  Oriented Chromatic Number problem 
can  be  solved in linear time \cite{GKL20,GKL21a,GKL21}.

The definition of oriented vertex-coloring was often
used for undirected graphs, where the
maximum value $\chi_o(G')$ of all possible orientations $G'$ of a graph
$G$ is considered. This leads to the fact that every tree has oriented chromatic number at most $3$.
There are also bounds on the oriented chromatic number for other graph classes, e.g. for outerplanar graphs \cite{Sop97} and
Halin graphs \cite{DS14}. Moreover, the oriented chromatic number of planar graphs
with  large girth was intensively investigated  e.g.\ in \cite{Mar13,Mar15,OP14}.

In \cite{GKL21} we introduced the concept of $g$-oriented $r$-colorings which
generalizes both oriented colorings and colorings of the underlying undirected graph.

\medskip
Next we give an equivalent characterization for  $\OCN$ in terms of a binary integer program.

\begin{remark}\label{rem-lp-ocn}
To formulate  $\OCN$ for some oriented graph $G=(V,E)$ on $n$ vertices as a binary integer program,
we introduce a binary variable $y_j\in\{0,1\}$, $j\in \{1,\ldots, n\}$ such that
$y_j=1$ if and only if color $j$ is used. Further, we use $n^2$ variables
$x_{i,j}\in\{0,1\}$, $i,j\in \{1,\ldots,n\}$ such that
$x_{i,j}=1$ if and only if vertex $v_i$ receives color $j$. The main
idea is to ensure the two conditions of Definition \ref{def-oc}
within conditions (\ref{01-p3-cn0x}) and (\ref{01-px}). W.l.o.g.\ we assume that  $E\neq \emptyset$.
\begin{eqnarray}
\text{Minimize}   \sum_{i=1}^{n}   y_i  \label{01-p1-cn0a}
\end{eqnarray}
subject to
\begin{eqnarray}
\sum_{j=1}^{n} x_{i,j}  & =  &1 \text{ for every }    i \in  \{1,\ldots,n\} \label{01-p2-cn0a} \\
x_{i_0,j} + x_{i_1,j}& \leq   &y_j \text{ for every }   (v_{i_0},v_{i_1})\in E,~j \in \{1,\ldots,n\}  \label{01-p3-cn0x} \\
   \bigvee_{j=1}^n x_{{i_0},j} \wedge x_{i_3,j}             & \leq & 1- \bigvee_{j=1}^n x_{i_1,j} \wedge x_{i_2,j}                            \text{ for every }   (v_{i_0},v_{i_1}),(v_{i_2},v_{i_3})\in E \label{01-px}   \\
y_j   &\in &  \{0,1\} \text{ for every } j \in   \{1,\ldots,n\} \label{01-p4-cna0a} \\
x_{i.j}   &\in &  \{0,1\} \text{ for every } i,j \in  \{1,\ldots,n\}  \label{01-p5-cn0a}
\end{eqnarray}

Equations (\ref{01-px})  are not in propositional logic. In order to reformulate
them for binary integer programming, one can use the results of \cite{Gur14}.
\end{remark}

\subsection{Oriented arc-colorings}

We now define oriented arc colorings for oriented graphs, which were introduced in  \cite{OPS08}.

\begin{definition}[Oriented arc-coloring \cite{OPS08}]\label{def-oac}
An \emph{oriented $r$-arc-coloring} of an oriented graph $G=(V,E)$ is a mapping $c:E\to \{1,\ldots,r\}$ such that:
\begin{itemize}
    \item $c((u,v))\neq c((v,w))$ for every two arcs $(u,v)\in E$ and $(v,w)\in E$
	\item $c((u,v))\neq c((y,z))$ for every four arcs $(u,v)\in E$, $(v,w)\in E$, $(x,y)\in E$, and $(y,z)\in E$, with $c((v,w))=c((x,y))$.
\end{itemize}
The {\em oriented chromatic index} of $G$, denoted with $\chi'_o(G)$, is the smallest $r$
such that $G$ has an oriented $r$-arc-coloring.
Then $E_i=\{e\in E\mid c(e)=i\}$, $1\leq i\leq r$ is a partition of $E$, which we call {\em color classes}.
\end{definition}

There is an oriented $r$-arc-coloring of an oriented graph $G_1$
if and only if there is a homomorphism from line digraph $LD(G_1)$ to some oriented graph $G_2$ on $r$ vertices.
Then, the oriented chromatic index of $G_1$ is the minimum number of vertices in an
oriented graph $G_2$ such that there is a homomorphism from line digraph $LD(G_1)$ to $G_2$.

We consider the following problem.

\begin{desctight}
\item[Name]      Oriented Chromatic Index ($\OCI$)
\item[Given]     An oriented graph $G=(V,E)$ and a positive integer $r \leq |V|$.
\item[Question]  Is there an oriented $r$-arc-coloring for $G$?
\end{desctight}

\medskip
If $r$ is not part of the input but a constant, we call the related
problem the $r$-Oriented Chromatic Index ($\OCI_{r}$).
If  $r\leq 3$, then we can decide $\OCI_{r}$ in polynomial time,
but $\OCI_{4}$ is NP-complete \cite{OPS08}.

The definition of oriented arc-coloring was often
used for undirected graphs, where the
maximum value $\chi'_o(G')$ of all possible orientations $G'$ of a graph
$G$ is considered.
There are  bounds on the oriented chromatic index for special graph classes,
e.g.\ for planar graphs \cite{OPS08} and outerplanar graphs \cite{PS06}.

\begin{observation}[\cite{OPS08}]\label{obs-nu-in}
Let $G$ be an oriented graph. Then, it holds that
$\chi'_o(G) = \chi_o(LD(G))$.
\end{observation}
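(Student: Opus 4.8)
The plan is to unwind the two definitions and observe that an oriented $r$-arc-coloring of $G$ is, as a function, exactly the same object as an oriented $r$-vertex-coloring of the line digraph $LD(G)$; the claimed equality of the two minima is then immediate.

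First I would check that $\chi_o(LD(G))$ is even well defined, i.e.\ that $LD(G)$ is an oriented graph whenever $G$ is. A loop of $LD(G)$ at a vertex $(x,y)$ would force $(x,y)=(y,z)$, hence $x=y=z$ and a loop of $G$; a pair of opposite arcs $(x,y)\to(y,z)$ and $(y,z)\to(x,y)$ in $LD(G)$ would force $z=x$, hence the opposite arcs $(x,y),(y,x)$ in $G$. Both are excluded because $G$ is oriented, so $LD(G)$ is loop-free and has no $2$-cycles, and $\chi_o(LD(G))$ makes sense.

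Next I would line up the two definitions. Recall that the vertex set of $LD(G)$ is $E$ and that $((u,v),(v,w))$ is an arc of $LD(G)$ exactly when both $(u,v)$ and $(v,w)$ lie in $E$. Hence a map $c:E\to\{1,\ldots,r\}$ satisfies the first condition of Definition~\ref{def-oac} (no two consecutive arcs of $G$ get the same color) precisely when $c$, viewed as a map on the vertex set of $LD(G)$, satisfies the first condition of Definition~\ref{def-oc} for $LD(G)$. Similarly, the four arcs $(u,v),(v,w),(x,y),(y,z)$ of $G$ appearing in the second condition of Definition~\ref{def-oac} are exactly the two arcs $(u,v)\to(v,w)$ and $(x,y)\to(y,z)$ of $LD(G)$ appearing in the second condition of Definition~\ref{def-oc}, with the roles of ``$v$'' and ``$x$'' there played by $(v,w)$ and $(x,y)$. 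Therefore $c$ is an oriented $r$-arc-coloring of $G$ if and only if it is an oriented $r$-vertex-coloring of $LD(G)$. Taking the minimum over $r$ on each side gives $\chi'_o(G)=\chi_o(LD(G))$.

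An alternative route, which I would mention but not carry out in detail, avoids unpacking the two-point conditions altogether: by the homomorphism characterizations stated just after Definition~\ref{def-oc} and just after Definition~\ref{def-oac}, both $\chi'_o(G)$ and $\chi_o(LD(G))$ equal the minimum number of vertices of an oriented graph $H$ admitting a homomorphism $LD(G)\to H$, so they coincide. The only point that needs genuine care in either approach is the well-definedness check in the first step; everything else is bookkeeping, so I do not expect a real obstacle here.
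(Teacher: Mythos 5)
Your proposal is correct, and it matches how the paper treats this statement: the paper gives no separate proof, citing \cite{OPS08} and relying on the fact that both quantities are characterized by homomorphisms from $LD(G)$ (the route you sketch as an alternative), while your main argument simply makes the definitional correspondence between Definition~\ref{def-oac} for $G$ and Definition~\ref{def-oc} for $LD(G)$ explicit. Your additional check that $LD(G)$ is itself an oriented graph is a sensible piece of care that the paper leaves implicit.
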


\begin{observation}[\cite{OPS08}]\label{obs-ind-chr}
Let $G$ be an oriented graph. Then, it holds that
$\chi'_o(G)\leq \chi_o(G)$.
\end{observation}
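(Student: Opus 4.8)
The statement to prove is Observation~\ref{obs-ind-chr}: for every oriented graph $G$, $\chi'_o(G) \leq \chi_o(G)$. This is attributed to \cite{OPS08}, and combined with Observation~\ref{obs-nu-in} it amounts to showing $\chi_o(LD(G)) \leq \chi_o(G)$.

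The plan is to build an explicit oriented coloring of $LD(G)$ from an optimal oriented vertex-coloring of $G$. Suppose $c : V \to \{1,\ldots,r\}$ is an oriented $r$-vertex-coloring of $G$, equivalently a homomorphism $h$ from $G$ to some tournament $H$ on $r$ vertices. The natural move is to pass to line digraphs: a homomorphism $h : G \to H$ induces a homomorphism $LD(h) : LD(G) \to LD(H)$ in the obvious way, sending an arc $(u,v)$ of $G$ to the arc $(h(u),h(v))$ of $H$ (which is a vertex of $LD(H)$). One checks this is well-defined and arc-preserving: if $(u,v),(v,w)$ is an arc of $LD(G)$ then $(h(u),h(v)),(h(v),h(w))$ is an arc of $LD(H)$. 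So $\chi_o(LD(G)) \leq |V(LD(H))| = |E(H)|$. However, $|E(H)|$ can be as large as $\binom{r}{2}$, which is far too weak, so this crude bound does not suffice by itself.

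The key additional observation is that $LD(H)$ for $H$ a tournament is not only $|E(H)|$-colorable but in fact $r$-colorable: we can color the vertex $(x,y)$ of $LD(H)$ simply by its first coordinate $x$, i.e. use the map $p : E(H) \to V(H)$, $(x,y) \mapsto x$. I would verify that $p$ is a homomorphism from $LD(H)$ to $H$: an arc of $LD(H)$ from $(x,y)$ to $(y,z)$ maps to the pair $(x,y)$, which is an arc of $H$ by definition. Hence $p$ witnesses $\chi_o(LD(H)) \leq r$. Composing, $p \circ LD(h)$ is a homomorphism from $LD(G)$ to $H$, giving $\chi_o(LD(G)) \leq r = \chi_o(G)$, and then Observation~\ref{obs-nu-in} finishes the proof. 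Alternatively, one can phrase the whole argument directly in terms of colorings without invoking $H$: given the vertex-coloring $c$ of $G$, define an arc-coloring by $c'((u,v)) := c(u)$, and check the two conditions of Definition~\ref{def-oac} reduce to the two conditions of Definition~\ref{def-oc} applied to $c$. I will present this direct version since it is shortest.

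The main thing to be careful about is the second condition of Definition~\ref{def-oac}: one must confirm that if $c'((u,v)) = c(u)$ and $c'((y,z)) = c(y)$ with $c'((v,w)) = c'((x,y))$, i.e. $c(v) = c(x)$, then $c(u) \neq c(y)$ — which is exactly the second bullet of Definition~\ref{def-oc} for the arcs $(u,v)$ and $(x,y)$ of $G$. The first condition $c'((u,v)) \neq c'((v,w))$ becomes $c(u) \neq c(v)$, the first bullet of Definition~\ref{def-oc}. I expect no real obstacle here; the only subtlety is bookkeeping the indices so that the correct instance of each $\OCN$ condition is invoked, and noting the argument uses only that $c$ is a valid oriented vertex-coloring, not that it is optimal.

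\begin{proof}
Let $G=(V,E)$ be an oriented graph and let $r=\chi_o(G)$ with an oriented $r$-vertex-coloring $c:V\to\{1,\ldots,r\}$. Define a mapping $c':E\to\{1,\ldots,r\}$ by $c'((u,v))=c(u)$ for every arc $(u,v)\in E$. We verify that $c'$ is an oriented $r$-arc-coloring of $G$.

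For the first condition of Definition~\ref{def-oac}, let $(u,v)\in E$ and $(v,w)\in E$. Then $c'((u,v))=c(u)$ and $c'((v,w))=c(v)$, and since $(u,v)\in E$ the first condition of Definition~\ref{def-oc} gives $c(u)\neq c(v)$, hence $c'((u,v))\neq c'((v,w))$.

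For the second condition of Definition~\ref{def-oac}, let $(u,v),(v,w),(x,y),(y,z)\in E$ with $c'((v,w))=c'((x,y))$, that is $c(v)=c(x)$. Applying the second condition of Definition~\ref{def-oc} to the arcs $(u,v)\in E$ and $(x,y)\in E$ with $c(v)=c(x)$, we obtain $c(u)\neq c(y)$, i.e.\ $c'((u,v))\neq c'((y,z))$.

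Thus $c'$ is an oriented $r$-arc-coloring of $G$, so $\chi'_o(G)\leq r=\chi_o(G)$.
\end{proof}
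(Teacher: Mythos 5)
Your proof is correct: coloring each arc by the color of its tail, $c'((u,v))=c(u)$, turns the two conditions of Definition~\ref{def-oac} exactly into the two conditions of Definition~\ref{def-oc}, which is the standard argument behind this bound. The paper itself gives no proof and simply cites \cite{OPS08}, so your direct verification matches the intended (referenced) reasoning and fills in the details correctly.
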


\medskip
We present an equivalent characterizations for  $\OCI$
using binary integer programs.

\begin{remark}\label{rem-lp-oci}
To formulate  $\OCI$ for some oriented graph $G=(V,E)$ on $n$ vertices and $m$ edges as a binary integer program,
we introduce a binary variable $y_k\in\{0,1\}$, $k\in \{1,\ldots, n\}$, such that
$y_k=1$ if and only if color $k$ is used.\footnote{By Observation \ref{obs-ind-chr} we need at most $n$ colors.}
Further, we use $m\cdot n \leq n^3$ variables
$x_{i,j,k}\in\{0,1\}$, $i,j,k\in \{1,\ldots,n\}$, such that
$x_{i,j,k}=1$ if and only if edge $(v_i,v_j)$ receives color  $k$. The main
idea is to ensure the two conditions of Definition \ref{def-oac}
within conditions (\ref{01-p3-oci}) and (\ref{01-pxa}).
W.l.o.g.\ we assume that $E$ has at least two arcs belonging to a directed path of length two.
\begin{eqnarray}
\text{Minimize}   \sum_{k=1}^{n}   y_k  \label{z2}
\end{eqnarray}
subject to
\begin{eqnarray}
\sum_{k=1}^{n} x_{i,j,k}  & =  &1 \text{ for every }    (v_i,v_j) \in E \label{01-p2-oci} \\
x_{{i_0},i_1,k} + x_{i_1,i_2,k}& \leq   &y_k \text{ for every }   (v_{i_0},v_{i_1}),(v_{i_1},v_{i_2})\in E,~k \in \{1,\ldots,n\}  \label{01-p3-oci} \\
\bigvee_{k=1}^n x_{i_1,i_2,k} \wedge x_{i_3,i_4,k}             & \leq & 1- \bigvee_{k=1}^n x_{i_0,i_1,k} \wedge x_{i_4,i_5,k}                            \text{ for every }   \\
 && (v_{i_0},v_{i_1}),(v_{i_1},v_{i_2}), (v_{i_3},v_{i_4}),(v_{i_4},v_{i_5})\in E \label{01-pxa}   \\
y_k   &\in &  \{0,1\} \text{ for every } k \in   \{1,\ldots,n\} \label{01-p4-oci} \\
x_{i,j,k}   &\in &  \{0,1\} \text{ for every } i,j,k \in  \{1,\ldots,n\}  \label{01-p5-oci}
\end{eqnarray}

Equations (\ref{01-pxa})  are not in propositional logic. In order to reformulate
them for binary integer programming, one can use the results of \cite{Gur14}.
\end{remark}

\section{Edge Series-Parallel Digraphs}

Before we consider edge series-parallel digraphs we recall some results for the well-known
undirected class of series-parallel graphs.
%
%
%
Undirected series-parallel graphs are graphs with two distinguished vertices called terminals,
formed recursively by parallel and series composition \cite[Section 11.2]{BLS99}.
These graphs are interesting from a practical point of view due their applications
in  modeling series and parallel electric circuits. Furthermore, they also
play an important role in theoretical computer science, since they
have tree-width at most 2 and are $K_4$-minor free graphs \cite{Bod98}.

The chromatic number of series-parallel graphs can easily be bounded as follows.

\begin{proposition}[\cite{Sey90}]
Let $G$ be some series-parallel graph. Then, it holds that
$\chi(G)\leq 3$.
\end{proposition}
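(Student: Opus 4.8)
The plan is to prove $\chi(G) \le 3$ for every series-parallel graph $G$ by structural induction on the recursive construction of $G$ from single edges via series and parallel composition. Since series-parallel graphs are exactly the graphs built in this way, a single edge being the base case, I would set up a slightly stronger inductive hypothesis that keeps track of the colors of the two terminals. Concretely, I would prove: for every series-parallel graph $G$ with terminals $s$ and $t$, and for every pair of distinct colors $a, b \in \{1,2,3\}$, there is a proper $3$-coloring $c$ of $G$ with $c(s) = a$ and $c(t) = b$. The base case is the single edge $\{s,t\}$, where assigning $c(s) = a$ and $c(t) = b$ with $a \ne b$ is trivially a proper coloring.

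For the inductive step I would treat the two compositions separately. In the series composition, $G$ is formed from series-parallel graphs $G_1$ (terminals $s_1, t_1$) and $G_2$ (terminals $s_2, t_2$) by identifying $t_1$ with $s_2$; the new terminals are $s_1$ and $t_2$. Given target colors $a \ne b$ for $s_1$ and $t_2$, pick any third color $d \in \{1,2,3\}$ distinct from $a$ (say $d \ne a$; if also $d \ne b$ so much the better, but we only need $d \ne a$), color $G_1$ with $c(s_1) = a$, $c(t_1) = d$ by induction, and color $G_2$ with $c(s_2) = d$, $c(t_2) = b$ by induction, which is possible precisely because $d \ne a$ and $d \ne b$ can be arranged, or more carefully: we need a color $d$ with $d \ne a$ (feasible for $G_1$) and $d \ne b$ (feasible for $G_2$); since $|\{1,2,3\} \setminus \{a,b\}| \ge 1$ we can pick such a $d$. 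The colorings agree on the identified vertex by construction, so their union is a proper $3$-coloring of $G$ with the desired terminal colors. In the parallel composition, $G$ is formed by identifying $s_1$ with $s_2$ and $t_1$ with $t_2$; the terminals are $s = s_1 = s_2$ and $t = t_1 = t_2$. Given targets $a \ne b$, just color each of $G_1$ and $G_2$ with $c(s) = a$, $c(t) = b$ by induction; the two colorings agree on both shared terminals, so the union is proper and has the required terminal colors.

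The only subtlety — and the one place to be careful — is making sure the definition of series-parallel graph being used allows the inductive decomposition at every stage; that is, every series-parallel graph other than a single edge is obtainable as a series or parallel composition of two strictly smaller series-parallel graphs. This is exactly the recursive definition cited from \cite[Section 11.2]{BLS99}, so it may be invoked directly. A second minor point is the parallel composition potentially creating multi-edges between $s$ and $t$; since we are coloring with $c(s) \ne c(t)$ this causes no problem, and if the class is defined to produce simple graphs one simply discards duplicate edges, which cannot make coloring harder. I do not anticipate a genuine obstacle here; the result is elementary, and the proof is a clean two-case induction with the terminal-color strengthening doing all the work.

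\begin{proof}
We use the recursive definition of series-parallel graphs \cite[Section 11.2]{BLS99}: every series-parallel graph $G$ has two terminals $s_G, t_G$, and either $G$ is a single edge $\{s_G,t_G\}$, or $G$ is obtained from two smaller series-parallel graphs $G_1, G_2$ by series composition (identify $t_{G_1}$ with $s_{G_2}$; set $s_G = s_{G_1}$, $t_G = t_{G_2}$) or by parallel composition (identify $s_{G_1}$ with $s_{G_2}$ and $t_{G_1}$ with $t_{G_2}$; set $s_G = s_{G_1} = s_{G_2}$, $t_G = t_{G_1} = t_{G_2}$).

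We prove the following stronger statement by induction on the number of edges of $G$: for every series-parallel graph $G$ and every pair of distinct colors $a,b \in \{1,2,3\}$ there is a proper $3$-coloring $c$ of $G$ with $c(s_G) = a$ and $c(t_G) = b$. Applying this with any choice of $a \ne b$ yields $\chi(G) \le 3$.

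\emph{Base case.} If $G$ is the single edge $\{s_G, t_G\}$, then setting $c(s_G) = a$ and $c(t_G) = b$ is a proper coloring since $a \ne b$.

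\emph{Series composition.} Let $G$ arise from $G_1, G_2$ by identifying $t_{G_1}$ with $s_{G_2}$, and fix distinct $a,b \in \{1,2,3\}$. Since $\{1,2,3\} \setminus \{a,b\}$ is nonempty, choose $d \in \{1,2,3\}$ with $d \ne a$ and $d \ne b$. By the inductive hypothesis applied to $G_1$ there is a proper $3$-coloring $c_1$ of $G_1$ with $c_1(s_{G_1}) = a$ and $c_1(t_{G_1}) = d$, and applied to $G_2$ there is a proper $3$-coloring $c_2$ of $G_2$ with $c_2(s_{G_2}) = d$ and $c_2(t_{G_2}) = b$. Since $c_1$ and $c_2$ assign the same color $d$ to the identified vertex $t_{G_1} = s_{G_2}$, their union is a well-defined coloring $c$ of $G$, and it is proper because every edge of $G$ lies in $G_1$ or in $G_2$. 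Moreover $c(s_G) = c_1(s_{G_1}) = a$ and $c(t_G) = c_2(t_{G_2}) = b$, as required.

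\emph{Parallel composition.} Let $G$ arise from $G_1, G_2$ by identifying $s_{G_1}$ with $s_{G_2}$ and $t_{G_1}$ with $t_{G_2}$, and fix distinct $a,b \in \{1,2,3\}$. By the inductive hypothesis there are proper $3$-colorings $c_1$ of $G_1$ and $c_2$ of $G_2$ with $c_i(s_{G_i}) = a$ and $c_i(t_{G_i}) = b$ for $i \in \{1,2\}$. Since $c_1$ and $c_2$ agree on both identified vertices, their union $c$ is a well-defined coloring of $G$; it is proper because every edge of $G$ lies in $G_1$ or $G_2$, and a multi-edge between $s_G$ and $t_G$ (if present) is properly colored since $a \ne b$. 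Finally $c(s_G) = a$ and $c(t_G) = b$.

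This completes the induction, and hence $\chi(G) \le 3$ for every series-parallel graph $G$.
\end{proof}
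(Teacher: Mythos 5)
Your proof is correct. Note that the paper itself gives no argument for this proposition --- it is quoted as a known result with a citation to Seymour's paper --- so there is no internal proof to compare against; your two-case structural induction, strengthened to prescribe arbitrary distinct colors on the two terminals, is a clean and complete elementary justification that matches exactly the recursive definition of series-parallel graphs the paper uses (terminals, series and parallel composition). The key step, choosing the third color $d\notin\{a,b\}$ for the identified vertex in the series composition, is handled correctly, and your remarks on parallel edges and on the decomposability of every series-parallel graph other than a single edge close the only loose ends. For what it is worth, an even shorter route is available from facts the paper already mentions: series-parallel graphs have tree-width at most $2$, hence are $2$-degenerate, and a greedy coloring along a degeneracy order uses at most $3$ colors; your argument, however, has the advantage of being self-contained and of exhibiting explicitly the freedom in the terminal colors, which is the same bookkeeping idea the paper later exploits (in Lemma \ref{le1}) for its dynamic program on esp-expressions.
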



The oriented chromatic number of undirected series-parallel graphs was
considered in \cite{Sop97}.

\begin{theorem}[\cite{Sop97}]\label{sop-bd7}
Let $G'$ be some orientation of a series-parallel graph $G$. Then, it holds that
$\chi_o(G')\leq 7$.
\end{theorem}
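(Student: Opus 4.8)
The plan is to proceed by structural induction on the series-parallel construction of $G$, but with a strengthened induction hypothesis that carries information about the \emph{terminals}. The naive statement "$\chi_o(G') \le 7$" is too weak to be closed under series composition, since when we glue two pieces at a terminal we must control how the colours of the two terminals interact with their neighbourhoods. So first I would fix a suitable target tournament $T$ on $7$ vertices --- the canonical choice here is the Paley tournament $QR_7$ on $\mathbb{Z}_7$, with an arc from $i$ to $j$ iff $j-i$ is a nonzero quadratic residue mod $7$ --- and recall its key homogeneity property: $QR_7$ is arc-transitive and, more importantly, for any vertex $v$ and any prescribed "type" (in-neighbour or out-neighbour) there are exactly three choices, and the automorphism group acts transitively on arcs. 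The strengthened claim I would try to prove is something like: for every orientation $G'$ of a two-terminal series-parallel graph with terminals $s,t$, and for every pair of distinct vertices $a,b \in V(QR_7)$ that is "compatible" with the arc/non-arc relation between $s$ and $t$ in $G'$, there is a homomorphism $G' \to QR_7$ sending $s \mapsto a$ and $t \mapsto b$.

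The induction then splits into the two composition cases. For the \textbf{parallel composition} $G' = G_1' \parallel G_2'$ (identifying the two $s$'s and the two $t$'s), the terminal pair is shared, so given a compatible target pair $(a,b)$ we simply apply the induction hypothesis to $G_1'$ and to $G_2'$ with the \emph{same} pair $(a,b)$ and take the union of the two homomorphisms; they agree on $\{s,t\}$ and the internal vertices are disjoint, so this is well-defined and is a homomorphism. The only subtlety is checking that $(a,b)$ remains compatible for each part --- the arc relation between $s$ and $t$ in the parallel composition is the "union" of the two, so one has to set up the notion of compatibility carefully (e.g. track whether $(s,t)$, $(t,s)$, or neither is an arc, and possibly also whether some $s$--$t$ path of length $2$ forces extra constraints). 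For the \textbf{series composition} $G' = G_1' \cdot G_2'$, where $t_1 = s_2$ is the merged middle vertex $m$ and the new terminals are $s_1, t_2$: given a compatible target pair $(a,c)$ for $(s_1, t_2)$, I need to find an intermediate colour $b \in V(QR_7)$ that is simultaneously a legal target for $m$ as the $t$-terminal of $G_1'$ (paired with $a$) and as the $s$-terminal of $G_2'$ (paired with $c$). This is where the strong structure of $QR_7$ does the work: by arc-transitivity and the fact that each vertex has $3$ out-neighbours and $3$ in-neighbours, a counting/pigeonhole argument should show a valid $b$ always exists.

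The \textbf{main obstacle} I anticipate is precisely pinning down the right notion of "compatible terminal pair" so that it is (i) strong enough to push through the series step --- in particular to guarantee the intermediate colour $b$ exists --- and (ii) weak enough to remain an invariant under both compositions and to hold in the base case (a single arc $s \to t$, where essentially any adjacent ordered pair $(a,b)$ with $(a,b)$ an arc of $QR_7$ should be allowed). There may be a genuine wrinkle: oriented colouring constraints are not purely local to edges --- the second condition in Definition~\ref{def-oc} couples vertices at distance $2$ --- so "compatibility" may need to record not just the $s$--$t$ adjacency but also which colours appear on the common neighbours of $s$ and of $t$, or at least that the homomorphism can be chosen so that the neighbours of $s$ avoid certain colours. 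One clean way to finesse this is to prove the stronger statement that \emph{every} compatible target pair works and that moreover the homomorphism can route through a "rich" set of colours at the internal terminal; getting the exact bookkeeping right, and verifying it survives the parallel step (where neighbourhoods of $s$ from the two parts merge), is the delicate part. A secondary, more mundane obstacle is handling the edge cases of the series-parallel grammar (single vertex vs.\ single edge as base case, and whether terminals are allowed to coincide), but these are routine once the invariant is fixed.

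Assuming this goes through, the bound $7$ follows immediately: take any orientation $G'$ of a series-parallel graph, pick an arbitrary compatible pair $(a,b)$ (which exists since $QR_7$ has arcs in both directions and enough non-adjacent... actually every two vertices of $QR_7$ are adjacent, so one just picks the arc in the direction dictated by $G'$, or either direction if $s,t$ are non-adjacent in $G'$), and the induction delivers a homomorphism $G' \to QR_7$, hence $\chi_o(G') \le |V(QR_7)| = 7$.
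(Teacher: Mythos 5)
Your overall plan---induction along the two-terminal series-parallel decomposition with a strengthened hypothesis that prescribes the colours of the two terminals in the Paley tournament $QR_7$---is the standard route: it is essentially how the bound is proved in \cite{Sop97}, and it is also the skeleton of this paper's own argument for the esp-digraph special case (Theorem~\ref{esp-bd7}), where the invariant can be kept much simpler because every esp-digraph orients its edges "from source to sink", so one only needs that the source--sink pair always sits on an arc of $QR_7$ and that every arc of $QR_7$ is subdivided by a directed path of length two (Figure~\ref{F05}). For the general statement the paper gives no proof and cites \cite{Sop97}.

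Two places where your write-up stops short of a proof, both repairable. First, the "main obstacle" you flag is not real: since you construct a homomorphism into the fixed tournament $QR_7$, which is an oriented graph, both conditions of Definition~\ref{def-oc} hold automatically (this is exactly the stated equivalence between oriented colourings and homomorphisms into oriented colour graphs). So no bookkeeping of the colours seen on neighbours of $s$ or $t$ is needed, and "compatible" can simply mean: $a\neq b$, and if $s\to t$ (resp.\ $t\to s$) is an arc of $G'$ then $(a,b)$ (resp.\ $(b,a)$) is an arc of $QR_7$. Second, the step you defer to "counting/pigeonhole" is the actual heart of the argument and must be verified concretely: in the series step the merged vertex $m$ needs a colour $b$ lying in one of $N^+(a)\cap N^-(c)$, $N^+(a)\cap N^+(c)$, $N^-(a)\cap N^-(c)$, $N^-(a)\cap N^+(c)$ (or merely $b\notin\{a,c\}$ when $m$ is non-adjacent to the corresponding terminal), according to the orientations of the edges between $s_1,m$ and $m,t_2$ when present. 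That all four intersections are non-empty for every pair of distinct $a,c$ is a specific property of $QR_7$ --- it is doubly regular, and the four sets have sizes $1,1,1,2$ (which of them has size $2$ depends on the direction of the arc between $a$ and $c$) --- and arc-transitivity plus "three in- and three out-neighbours" alone does not yield it, so it has to be checked, e.g.\ by the residue description of the arcs. With these two repairs, and with your (correct and necessary) insistence that the terminal images be distinct even when $s,t$ are non-adjacent (otherwise the series step could be asked to produce $b$ with $a\to b\to a$), the induction closes: the base case is any arc of $QR_7$, the parallel step glues two homomorphisms agreeing on the identified terminals, and $\chi_o(G')\le 7$ follows.
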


In \cite{Sop97} it was also shown that this bound is tight.
In \cite{PS06} this was strengthened by giving a triangle-free orientation of
a series-parallel graph of order 15 and oriented chromatic number $7$.

For the chromatic index of orientations of undirected series-parallel graphs
Observation \ref{obs-ind-chr} and Theorem
\ref{sop-bd7} lead to the following bound.

\begin{corollary}\label{chri-bd7}
Let $G'$ be some orientation of a series-parallel graph $G$. Then, it holds that
$\chi'_o(G')\leq 7$.
\end{corollary}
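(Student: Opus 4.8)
The plan is to combine the two results the excerpt has just recalled. Observation~\ref{obs-ind-chr} tells us that for any oriented graph $G$ we have $\chi'_o(G)\leq \chi_o(G)$, and Theorem~\ref{sop-bd7} tells us that for any orientation $G'$ of a series-parallel graph $G$ we have $\chi_o(G')\leq 7$. So the entire argument is a one-line chain: given an orientation $G'$ of a series-parallel graph $G$, apply Theorem~\ref{sop-bd7} to $G'$ to obtain $\chi_o(G')\leq 7$, then apply Observation~\ref{obs-ind-chr} to the oriented graph $G'$ to obtain $\chi'_o(G')\leq \chi_o(G')$, and transitivity of $\leq$ yields $\chi'_o(G')\leq 7$.

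The only subtlety worth a sentence is making sure the hypotheses of the two cited statements match. Observation~\ref{obs-ind-chr} is stated for an arbitrary oriented graph, so it applies to $G'$ directly; no orientation-over-all-orientations maximization is needed here, since the statement of the corollary already fixes a particular orientation $G'$. Theorem~\ref{sop-bd7} is likewise stated for an arbitrary orientation $G'$ of a series-parallel graph, which is exactly our setting. Hence no case analysis, no induction on the series/parallel decomposition, and no constructions are required.

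I do not expect any real obstacle: the result is an immediate corollary, which is presumably why the authors phrase it as a corollary rather than a theorem. If one wanted to be fully self-contained one could instead re-derive the bound from the recursive structure of series-parallel graphs (tracking a bounded color graph through parallel and series compositions, mirroring the proof technique used later in the paper for esp-digraphs), but that would be strictly more work and is unnecessary given that Theorem~\ref{sop-bd7} is available. The proof therefore reads, in full: by Theorem~\ref{sop-bd7}, $\chi_o(G')\leq 7$; by Observation~\ref{obs-ind-chr}, $\chi'_o(G')\leq \chi_o(G')\leq 7$.
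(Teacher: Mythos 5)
Your proposal is correct and matches the paper exactly: the corollary is obtained by combining Theorem~\ref{sop-bd7} with Observation~\ref{obs-ind-chr}, i.e.\ $\chi'_o(G')\leq \chi_o(G')\leq 7$.
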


In \cite{PS06} it was shown that the bound is tight (even for an orientation
of an outerplanar graph).

We recall the definition of edge series-parallel digraphs, originally defined as edge series-parallel multidigraphs, from \cite{VTL82}.

\begin{definition}[Edge Series-Parallel Multidigraphs]\label{def-esp}
The class of {\em edge series-parallel multidigraphs}, {\em esp-digraphs} for
short, is recursively defined as follows.
\begin{enumerate}[(i)]

\item  Every digraph
of two distinct vertices joined by a single arc $(\{u,v\},\{(u,v)\})$,
denoted by $(u,v)$, is an {\em edge series-parallel multidigraph}.

\item If  $G_1=(V_1,A_1)$ and $G_2=(V_2,A_2)$
are vertex-disjoint minimal edge series-parallel multidigraphs, then
\begin{enumerate}

\item the {\em parallel composition} $G_1\cup G_2$, which
identifies the source of $G_1$ with the
source of $G_2$ and the sink of $G_1$ with the sink of $G_2$,
is an {\em edge series-parallel multidigraph} and

\item the {\em series composition} $G_1\times G_2$, which
identifies the sink of $G_1$ with the source of $G_2$,
is an {\em edge series-parallel multidigraph}.
\end{enumerate}
\end{enumerate}
\end{definition}

An expression $X$ using  the operations of Definition \ref{def-esp}
is called an {\em esp-expression} and  $\g(X)$ the defined graph. For a better understanding we now give an example of such an expression.

\begin{example} \label{ex-esp}
The  esp-expression
$$
X_1=  \left((v_1,v_2)\times \left(\left((v_2,v_3)\times \left((v_3,v_4)\times (v_4,v_5) \right)\right)\cup (v_2,v_5)\right)\right) \times (v_5,v_6)
$$
defines the esp-digraph
shown in Figure \ref{F02e}.
\end{example}

\begin{figure}[hbtp]
\centerline{\includegraphics[width=0.55\textwidth]{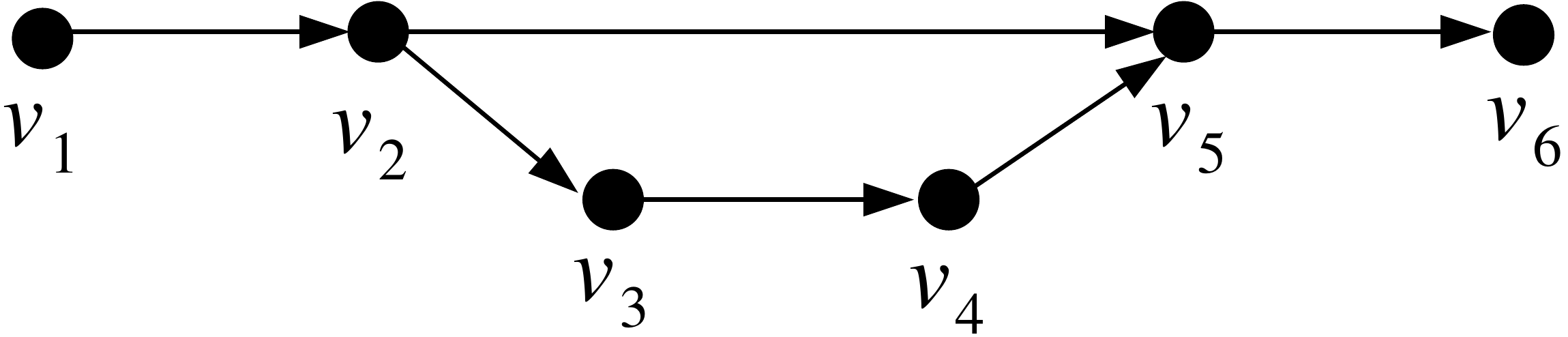}}
\caption{$\gr(X_1)$ in Example \ref{ex-esp}.}
\label{F02e}
\end{figure}

Several classes of digraphs are included in the set of all esp-digraphs.

\begin{example}\label{ex-2-esp}
\begin{enumerate}
\item
Every oriented path on $n$ vertices is an esp-digraph by the following esp-expression.
$$X_{P_n}=(\ldots(((v_1,v_2)\times(v_2,v_3))\times(v_3,v_4))  \ldots ) \times (v_{n-1},v_n)$$

\item
Every oriented cycle on $n\geq 3$ vertices with one reversed arc is an esp-digraph by the following esp-expression.
$$X_{C'_n}= X_{P_n} \cup (v_1,v_n)$$
\end{enumerate}
\end{example}

For every esp-digraph we can define a tree structure,
denoted as {\em esp-tree}. The leaves of the esp-tree represent the
arcs of the digraph and the inner nodes of the esp-tree  correspond
to the operations applied on the sub-expressions defined by the subtrees.
For some vertex $u$ of esp-tree $T$ we denote by $T(u)$
the subtree rooted at $u$ and by $X(u)$ the {\em sub-expression}
defined by $T(u)$.

For every esp-digraph one can construct an esp-tree in linear time
\cite{Val78}.


In \cite{HY87} the notation two-terminal series-parallel (TTSP) graphs is used
for the same graphs and give a parallel algorithm for recognizing directed
series-parallel graphs. Further, \cite{Epp92} gives an improved parallel
algorithm for recognizing directed (and undirected) series-parallel graphs.

\begin{observation}
Let $G$ be an esp-digraph. Then, it holds that $G$ has exactly
one source and exactly one sink.
\end{observation}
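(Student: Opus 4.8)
The plan is to prove by induction on the structure of the esp-expression that every esp-digraph has exactly one source (a vertex of indegree $0$) and exactly one sink (a vertex of outdegree $0$), and moreover that these are precisely the two terminals (source and sink) referred to in Definition~\ref{def-esp}. Strengthening the statement to track the terminals explicitly is what makes the induction go through, since the composition operations are defined in terms of sources and sinks.

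First I would handle the base case: a single arc $(u,v)$ has $u$ as its unique vertex of indegree $0$ and $v$ as its unique vertex of outdegree $0$, and these are the designated source and sink. For the inductive step, suppose $G_1$ and $G_2$ are esp-digraphs with unique sources $s_1,s_2$ and unique sinks $t_1,t_2$, each coinciding with the respective terminals. For the series composition $G_1 \times G_2$, the vertex $t_1$ is identified with $s_2$; I would argue that in the new digraph this merged vertex has outdegree $\odeg_{G_2}(s_2) \geq 1$ (there is at least one arc out of $s_2$ in $G_2$, since $G_2$ contains an arc and is "connected" through its terminals) and indegree $\ideg_{G_1}(t_1) \geq 1$, so it is neither a source nor a sink; every other vertex keeps its in- and out-degrees, so $s_1$ remains the unique source and $t_2$ the unique sink. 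For the parallel composition $G_1 \cup G_2$, the sources $s_1,s_2$ are identified into one vertex whose indegree is $\ideg_{G_1}(s_1)+\ideg_{G_2}(s_2)=0$ and whose outdegree is positive, and dually for the merged sink; again all other degrees are unchanged, so there is exactly one source and one sink.

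The main obstacle I expect is a small subtlety in the series composition: I must be sure that the merged vertex $t_1=s_2$ genuinely has positive indegree and positive outdegree, which relies on every esp-digraph having at least one arc incident from its source and at least one arc incident to its sink. This itself follows from the same induction (it should be added as a third clause in the inductive hypothesis): a single arc satisfies it trivially, and both compositions preserve it because the relevant terminal degrees add or are inherited. Once that auxiliary fact is in place, the degree bookkeeping in each of the three cases is routine, and the uniqueness follows because in each composition exactly the identified terminal vertices change their degree profile while all others are untouched.
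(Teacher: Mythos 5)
Your induction is correct and is exactly the argument the paper has in mind: the paper states this as an Observation without proof, treating it as immediate from the recursive definition, and your structural induction with the strengthened hypothesis (the terminals are the unique source and sink, plus the auxiliary clause that the source has an outgoing arc and the sink an incoming arc) is the natural formalization. One small point worth making explicit: the auxiliary clause also yields source $\neq$ sink in each constituent digraph, which you use implicitly when you assert that $s_1$ and $t_2$ are untouched by the identification $t_1=s_2$ in the series composition.
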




For every digraph $G=(\{v,u\},(u,v))$, $\un(G)$ is series-parallel graph. 
Further, we can replace every parallel composition by an parallel composition in the undirected case, and every series composition by a series composition in the undirected case, which leads to the following result.

\begin{proposition} \label{esp-un}
Let $G$ be an esp-digraph. Then, it holds that
$\un(G)$ is a series-parallel graph.
\end{proposition}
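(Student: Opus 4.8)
The statement to prove is Proposition~\ref{esp-un}: if $G$ is an esp-digraph, then $\un(G)$ is a series-parallel graph. The natural approach is a structural induction on the esp-expression $X$ with $\g(X)=G$, following exactly the recursive definition in Definition~\ref{def-esp}, and showing that at each stage the underlying undirected graph is obtained by the corresponding undirected series-parallel operation on the two distinguished terminals (source and sink of $G$ become the two terminals of $\un(G)$).

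\textbf{Base case.} For the single-arc digraph $G=(\{u,v\},\{(u,v)\})$, the underlying undirected graph $\un(G)$ is the single edge $\{u,v\}$ with terminals $u,v$, which is by definition a (trivial) series-parallel graph. First I would record this, noting that the source of $G$ and the sink of $G$ are precisely the two terminals of $\un(G)$.

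\textbf{Inductive step.} Assume $G_1=(V_1,A_1)$ and $G_2=(V_2,A_2)$ are esp-digraphs and, by the induction hypothesis, $\un(G_1)$ and $\un(G_2)$ are series-parallel graphs whose terminals are the source/sink of $G_1$ and of $G_2$ respectively; assume $G_1,G_2$ are vertex-disjoint. I would then check the two composition cases. For the parallel composition $G=G_1\cup G_2$, the operation identifies the two sources and the two sinks; passing to underlying undirected graphs, $\un(G)$ is exactly the parallel composition of $\un(G_1)$ and $\un(G_2)$ (the identified source pair and sink pair become the common terminals), hence series-parallel, with terminals equal to the source and sink of $G$. For the series composition $G=G_1\times G_2$, the operation identifies the sink of $G_1$ with the source of $G_2$; again passing to underlying graphs, $\un(G)$ is the series composition of $\un(G_1)$ and $\un(G_2)$ at that identified vertex, with the remaining terminals (source of $G_1$, sink of $G_2$) becoming the terminals of $\un(G)$, hence series-parallel. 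The key routine fact one uses here is that taking underlying undirected graphs commutes with vertex identifications performed by the compositions, i.e. $\un(G_1 \cup G_2)$ and $\un(G_1 \times G_2)$ coincide with the undirected parallel/series compositions of $\un(G_1),\un(G_2)$; this is immediate from the definition $\un(G)=(V,\{\{x,y\}\mid (x,y)\in E\})$.

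\textbf{Main obstacle.} There is no deep obstacle; the only subtle point is that esp-digraphs are, strictly speaking, \emph{multidigraphs} (two vertices may be joined by several parallel arcs after a parallel composition of two single arcs), so $\un(G)$ could a priori be a multigraph. One must therefore be slightly careful about the convention: either work with undirected series-parallel \emph{multigraphs} throughout, or observe that simplifying parallel edges does not leave the class of series-parallel graphs (a multi-edge between two terminals is a parallel composition of single edges, and collapsing it yields again a series-parallel graph). I would state this convention explicitly and then the induction goes through cleanly, matching the informal argument already given in the paragraph preceding the proposition.
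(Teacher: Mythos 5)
Your proof is correct and follows essentially the same route as the paper, which argues (in the paragraph preceding the proposition) that the single-arc base case yields a series-parallel graph and that each directed parallel/series composition can be replaced by the corresponding undirected composition. Your explicit remark about the multigraph convention is a welcome extra precision, but it does not change the argument.
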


\begin{remark}\label{rem-esp-ori}
By Proposition \ref{esp-un} every esp-digraph is an orientation of a series-parallel graph.
\end{remark}

\subsection{Oriented Arc-Colorings of Edge Series-Parallel Digraphs}

Since every esp-digraph is an orientation of
a series-parallel graph by Corollary \ref{chri-bd7} we have the following bound.

\begin{corollary}\label{cor-77}
Let $G$ be an esp-digraph. Then, it holds that
$\chi'_o(G)\leq 7$.
\end{corollary}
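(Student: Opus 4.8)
The statement to prove is Corollary~\ref{cor-77}: every esp-digraph $G$ satisfies $\chi'_o(G)\leq 7$. The plan is to derive this directly from the two results already established for the undirected case, without touching the recursive structure of esp-digraphs at all. By Remark~\ref{rem-esp-ori} (equivalently Proposition~\ref{esp-un}), every esp-digraph $G$ is an orientation of a series-parallel graph $H:=\un(G)$. Corollary~\ref{chri-bd7} asserts that for \emph{any} orientation $G'$ of \emph{any} series-parallel graph, $\chi'_o(G')\leq 7$. Applying this with $G'=G$ and the series-parallel graph $H$ gives $\chi'_o(G)\leq 7$ immediately.

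\textbf{Steps, in order.} First I would invoke Proposition~\ref{esp-un} to name $H=\un(G)$ as a series-parallel graph. Second, I would observe that $G$ is an orientation of $H$ (this is precisely the content of Remark~\ref{rem-esp-ori}, but it is also immediate: orienting each edge $\{u,v\}$ of $H$ by whichever of $(u,v)$, $(v,u)$ lies in $G$ recovers $G$, and since $G$ is an oriented graph it has no loops or opposite arcs, so this orientation is well defined). Third, I would apply Corollary~\ref{chri-bd7} to this orientation, concluding $\chi'_o(G)\leq 7$. That is the whole argument.

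\textbf{Main obstacle.} There is essentially no obstacle — the corollary is a one-line consequence of results stated earlier in the excerpt, and the only thing to be careful about is the logical direction: Corollary~\ref{chri-bd7} is a bound on the oriented chromatic index over \emph{all} orientations of a given series-parallel graph, hence it certainly covers the one particular orientation that $G$ happens to be. One might note for context (though it is not needed for the proof) that this bound is not claimed to be tight here in the same form; the paper separately establishes tightness for esp-digraphs via Example~\ref{ex-six2}. But for the inequality itself, the proof is the two-sentence chain Proposition~\ref{esp-un} $\Rightarrow$ ``$G$ is an orientation of a series-parallel graph'' $\Rightarrow$ Corollary~\ref{chri-bd7}.
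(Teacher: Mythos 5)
Your proof is correct and matches the paper's own argument: the paper derives Corollary~\ref{cor-77} exactly by noting (via Proposition~\ref{esp-un} and Remark~\ref{rem-esp-ori}) that every esp-digraph is an orientation of a series-parallel graph and then applying Corollary~\ref{chri-bd7}. The paper only additionally remarks that the bound could alternatively be obtained from Proposition~\ref{prop-msp-7}, Lemma~\ref{le-vtl} and Observation~\ref{obs-nu-in}, but your route is its primary one.
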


Alternatively, the last result
can be obtained from Proposition \ref{prop-msp-7}, Lemma \ref{le-vtl} and
Observation \ref{obs-nu-in}.


\begin{remark}
\label{rem-esp-chiprime}
We can also bound the oriented chromatic index of an esp-digraph $G$  using
the corresponding line digraph $LD(G)$ which is an msp-digraph (cf.\ Definition \ref{def-msp}) by
Lemma \ref{le-vtl}.
%
%
%
$$\begin{array}{lclll}
\chi'_o(G)&=&  \chi_o(LD(G))   &\text{Observation } \ref{obs-nu-in} \\
      &\leq & 7   &   \text{Lemma } \ref{le-vtl}  \text{ and Proposition } \ref{prop-msp-7}
\end{array}
$$
\end{remark}

%


The results of  \cite{PS06} even show that $7$ is a tight upper bound for the oriented
chromatic index of every orientation of series-parallel graphs (even for an orientation
of an outerplanar graph).

In order  to show that this bound is also tight for the subclass of esp-digraphs we give the
next example.


\begin{example}\label{ex-six2}
The esp-expression
$$
\begin{array}{lcl}
X_2 &=& (v_{1},v_2) \times  ((v_2,v_5) \cup (v_2,v_3) \times ((v_3,v_5) \cup (v_3,v_4) \times (v_4,v_5)))\times \\

&&  ((v_5,v_9) \cup ((v_5,v_7) \cup (v_5,v_6) \times (v_{6},v_7))\times ((v_{7},v_9) \cup
(v_{7},v_8) \times (v_{8},v_9)))\times \\
&&((v_9,v_{16})\cup ((v_9,v_{13}) \cup ((v_9,v_{11}) \cup (v_9,v_{10}) \times (v_{10},v_{11}))\times \\
&&((v_{11},v_{13}) \cup (v_{11},v_{12}) \times (v_{12},v_{13})))\times  \\
&&((v_{13},v_{16}) \cup((v_{13},v_{15}) \cup (v_{13},v_{14}) \times (v_{14},v_{15}))\times (v_{15},v_{16})) )\times  (v_{16},v_{17})
\end{array}
$$
defines the esp-digraph on 17 vertices shown in Figure \ref{F09c}. Further, by Observation \ref{obs-nu-in}
and since $\g(X_5)$,where $X_5$ is defined in Example \ref{ex-sixa}, is the line digraph of $\g(X_2)$
it holds that
$$\chi'_o(\g(X_2))=\chi_o(LD(\g(X_2)))= \chi_o(\g(X_5))= 7.$$
This implies that the bound of Corollary \ref{cor-77} is best possible.
\end{example}

\begin{figure}[hbtp]
\centerline{\includegraphics[width=0.9\textwidth]{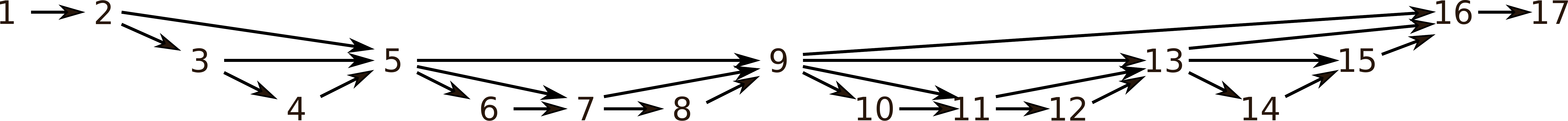}}
\caption{$\gr(X_2)$ in Example \ref{ex-six2}.}
\label{F09c}
\end{figure}

By Theorem \ref{msp-ori-c} and Observation \ref{obs-nu-in} we obtain the following result.

\begin{corollary}\label{esp-index-com}
Let $G$ be an esp-digraph. Then, the oriented
chromatic index of $G$ can be computed in linear time.
\end{corollary}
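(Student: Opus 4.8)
\textbf{Proof proposal for Corollary \ref{esp-index-com}.}

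The plan is to reduce the computation of the oriented chromatic index of an esp-digraph $G$ to the computation of the oriented chromatic number of an msp-digraph, and then invoke the already-established linear-time algorithm for the latter. By Observation \ref{obs-nu-in} we have $\chi'_o(G) = \chi_o(LD(G))$, so it suffices to (a) construct the line digraph $LD(G)$, (b) recognize that $LD(G)$ is an msp-digraph together with a defining decomposition (msp-tree), and (c) run the linear-time procedure of Theorem \ref{msp-ori-c} on that decomposition. Step (b) is supplied by Lemma \ref{le-vtl}, the classical Valdes--Tarjan--Lawler fact that the line digraph of an esp-digraph is precisely an msp-digraph; the remaining work is to make each step run in time $\bigo(n+m)$.

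First I would take an esp-tree $T$ for $G$, which can be produced in linear time by the recognition algorithm of \cite{Val78} (cited in the excerpt). From $T$ one obtains $G$ with $n$ vertices and $m$ arcs, where $m = \bigo(n)$ for esp-digraphs, so $|V(LD(G))| = m$ and the number of arcs of $LD(G)$ is $\sum_{v} \ideg(v)\cdot\odeg(v)$, which is still linear in the size of $G$ because esp-digraphs are sparse. Hence $LD(G)$ has size $\bigo(n+m)$ and can be built explicitly in that time by, for each vertex $v$ of $G$, joining every in-arc of $v$ to every out-arc of $v$. Rather than re-running a decomposition algorithm on $LD(G)$ from scratch, I would translate the esp-tree $T$ of $G$ directly into an msp-tree of $LD(G)$: the standard correspondence sends each series/parallel operation of $T$ to a series/parallel operation in the msp-decomposition, with the arcs of $G$ (the leaves of $T$) becoming the vertices of $LD(G)$. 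This translation is a single bottom-up pass over $T$, hence linear time, and it avoids any hidden polynomial blow-up.

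Finally, with an msp-tree for $LD(G)$ in hand, the linear-time algorithm guaranteed by Theorem \ref{msp-ori-c} computes $\chi_o(LD(G))$ in time linear in the size of $LD(G)$, which is $\bigo(n+m)$ in the size of $G$. Chaining the bounds, $\chi'_o(G) = \chi_o(LD(G))$ is computed in total time $\bigo(n+m)$, which proves the corollary.

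I expect the only genuinely delicate point to be verifying that the size of $LD(G)$ — and in particular the number of its arcs $\sum_v \ideg(v)\odeg(v)$ — is $\bigo(n+m)$ rather than quadratic; this follows because $\un(G)$ is series-parallel (Proposition \ref{esp-un}), hence $K_4$-minor-free with $m \le 2n-3$, and moreover the recursive parallel/series structure keeps the in- and out-degrees from concentrating badly, so the line-digraph construction and the tree translation both stay linear. Everything else is bookkeeping: constructing the esp-tree (cited linear-time result), the syntactic tree-to-msp-tree translation, and the black-box invocation of Theorem \ref{msp-ori-c}.
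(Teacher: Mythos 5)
Your high-level route is exactly the paper's: the stated proof of Corollary \ref{esp-index-com} is precisely the combination of Observation \ref{obs-nu-in} ($\chi'_o(G)=\chi_o(LD(G))$), Lemma \ref{le-vtl} ($LD(G)$ is an msp-digraph), and Theorem \ref{msp-ori-c} (linear-time computation of $\chi_o$ for msp-digraphs). However, the quantitative claim you single out as ``the only genuinely delicate point'' is false as you justify it. It is not true that $|E(LD(G))|=\sum_v \ideg(v)\cdot\odeg(v)$ is in $\bigo(n+m)$ for esp-digraphs, and sparsity of $\un(G)$ (or $K_4$-minor-freeness) does not give this: the parallel/series structure does let in- and out-degrees concentrate at one vertex. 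Concretely, let $D_k$ be the parallel composition of $k$ directed paths of length two from a source $u$ to a sink $w$ (so $w$ has indegree $k$), and let $G=D_k\times D_k'$ with $D_k'$ a disjoint copy; the identified middle vertex has indegree and outdegree $k$, so $G$ has $\Theta(k)$ vertices and arcs while $LD(G)$ has $\Theta(k^2)$ arcs. Hence step (a) of your plan, ``build $LD(G)$ explicitly in $\bigo(n+m)$ time,'' fails, and with it the chain ``linear in the size of $LD(G)$, which is $\bigo(n+m)$ in the size of $G$.''

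The repair is already contained in your own proposal and is the way to read the paper's one-line proof: never materialize the arcs of $LD(G)$. Translate the esp-tree $T$ of $G$ (computable in $\bigo(n+m)$ time by \cite{Val78}) node-by-node into an msp-expression/msp-tree for $LD(G)$, whose leaves are the $m$ arcs of $G$, and feed that tree directly to the algorithm behind Theorem \ref{msp-ori-c}; that algorithm does constant work per node of the msp-tree (as in the analogous Theorem \ref{msp-ori-a-c}), so the total time is $\bigo(n+m)$ in the size of $G$ even though $LD(G)$ itself may have quadratically many arcs. So your argument becomes correct once you delete the explicit construction of $LD(G)$ and the erroneous degree-concentration claim, and let the esp-tree-to-msp-tree translation carry the whole linearity argument.
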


\subsection{Oriented Vertex-Colorings of Edge Series-Parallel Digraphs}


Since every esp-digraph is an orientation of
a series-parallel graph by Theorem \ref{sop-bd7} we have the following bound.


\begin{corollary}\label{cor-78}
Let $G$ be an esp-digraph. Then, it holds that
$\chi_o(G)\leq 7$.
\end{corollary}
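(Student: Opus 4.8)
\textbf{Proof proposal for Corollary \ref{cor-78}.}
The statement to prove is that every esp-digraph $G$ satisfies $\chi_o(G) \leq 7$. The plan is to invoke the machinery that has already been set up in the excerpt rather than to redo the combinatorial work from scratch. First I would recall Remark \ref{rem-esp-ori} (equivalently Proposition \ref{esp-un} together with Remark \ref{rem-esp-ori}), which tells us that every esp-digraph $G$ is an orientation of a series-parallel graph $\un(G)$. Then I would simply apply Theorem \ref{sop-bd7}, which states that for any orientation $G'$ of a series-parallel graph one has $\chi_o(G') \leq 7$. Taking $G' = G$ and the series-parallel graph to be $\un(G)$ gives $\chi_o(G) \leq 7$ immediately.

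The only subtlety worth spelling out is the compatibility of the two notions of ``orientation of a series-parallel graph.'' Definition \ref{def-esp} builds esp-digraphs recursively from single arcs via parallel and series composition, whereas Theorem \ref{sop-bd7} speaks of arbitrary orientations of undirected series-parallel graphs. Proposition \ref{esp-un} bridges this gap: it shows $\un(G)$ is series-parallel, and Remark \ref{rem-esp-ori} records that $G$ is then one particular orientation of that graph. Since Theorem \ref{sop-bd7} bounds $\chi_o$ uniformly over \emph{all} orientations, it in particular bounds it for the orientation $G$. Hence no case analysis on the esp-expression is needed.

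I do not anticipate a genuine obstacle here; the result is a one-line corollary of previously stated facts. If one wanted a self-contained argument not routed through Theorem \ref{sop-bd7}, the alternative would be to argue directly by structural induction on the esp-expression, maintaining as an invariant a homomorphism into a suitable fixed tournament on $7$ vertices that respects the two terminals (source and sink) of each sub-digraph; the parallel and series composition steps would then need to be shown to preserve the existence of such a terminal-respecting homomorphism. That direct route is essentially the content of the proof of Theorem \ref{sop-bd7} specialized to this setting, so it is cleaner simply to cite it. I would therefore present the proof as the two-line deduction: $G$ is an orientation of the series-parallel graph $\un(G)$ by Proposition \ref{esp-un} and Remark \ref{rem-esp-ori}, so $\chi_o(G) \leq 7$ by Theorem \ref{sop-bd7}.
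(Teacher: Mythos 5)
Your proposal is correct and matches the paper's own derivation of Corollary \ref{cor-78}: the paper likewise deduces the bound by noting (Proposition \ref{esp-un} and Remark \ref{rem-esp-ori}) that every esp-digraph is an orientation of a series-parallel graph and then applying Theorem \ref{sop-bd7}. The alternative inductive argument you sketch is essentially the paper's separate Theorem \ref{esp-bd7}, so no gap remains.
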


The proof of Theorem \ref{sop-bd7} given in \cite{Sop97} uses the
color graph $QR_7=(V,E)$ where $V=\{1,2,3,4,5,6,7\}$ and
$E=\{(i,j)\mid j-i  \equiv 1, 2, \text{ or } 4 ~(\bmod ~ 7) \}$
which is built from the
non-zero quadratic residues of 7 and is shown in Figure \ref{F06}.
We next give an alternative proof of Corollary \ref{cor-78} using the
recursive structure of esp-digraphs.

\begin{figure}[hbtp]
\centerline{\includegraphics[width=0.28\textwidth]{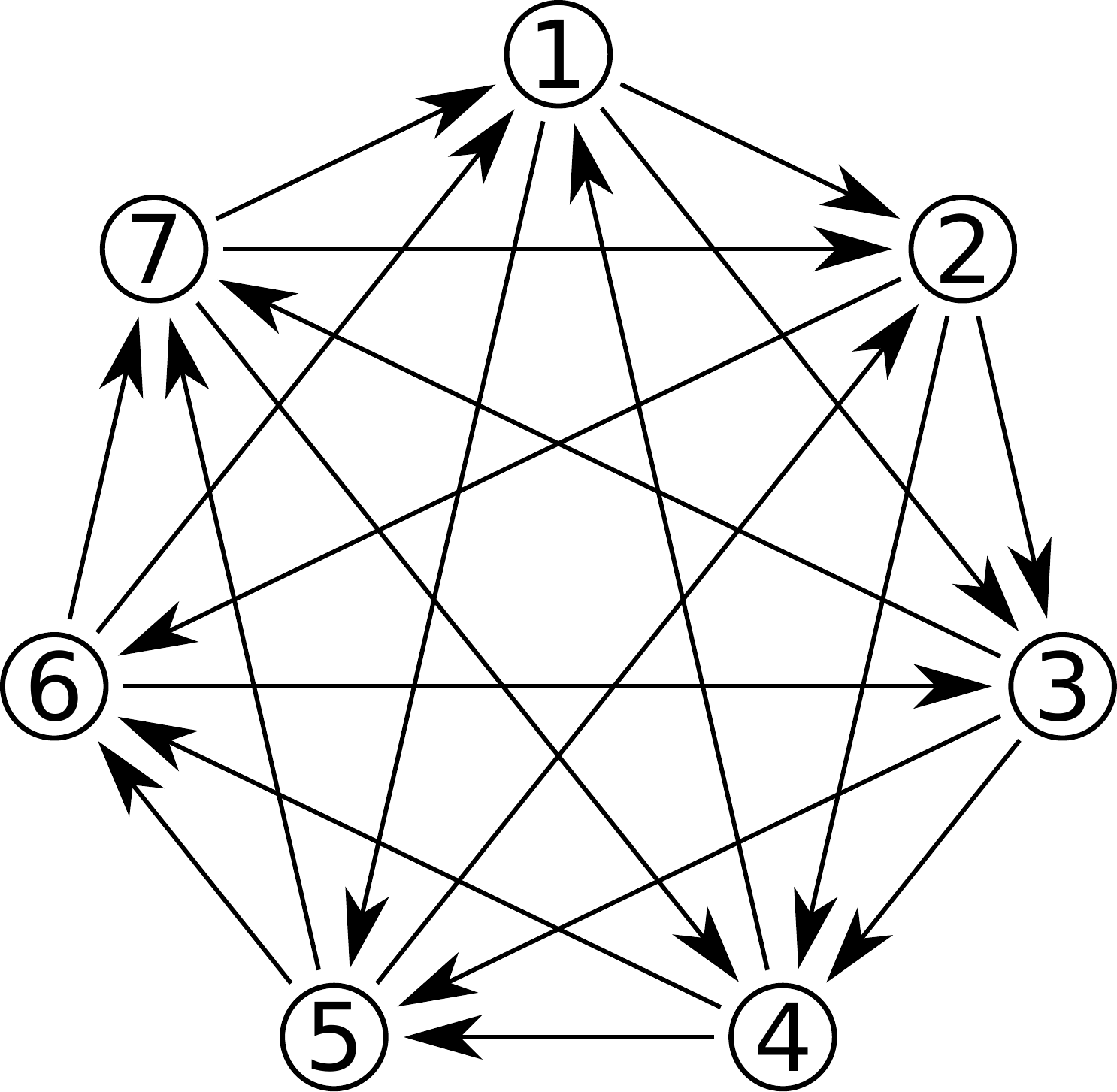}}
\caption{Color graph used in the proof of Theorem  \ref{esp-bd7}.}
\label{F06}
\end{figure}

\begin{theorem}\label{esp-bd7}
Let $G$ be an esp-digraph. Then, it holds that $\chi_o(G)\leq 7$.
\end{theorem}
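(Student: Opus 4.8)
The plan is to prove the bound by induction on the structure of the esp-expression, using the same color graph $QR_7$ that underlies the proof of Theorem~\ref{sop-bd7}, but carrying a stronger inductive hypothesis that keeps track of the colors assigned to the two terminals (source and sink). The naive statement ``$\chi_o(G)\le 7$'' is not directly amenable to the recursion, because when we glue two esp-digraphs in series we identify the sink of one with the source of the other, and in parallel we identify both terminal pairs; to make the homomorphisms into $QR_7$ combine, we need to control what happens at the terminals. So the strengthened claim I would prove is: \emph{for every esp-digraph $G$ with source $s$ and sink $t$, and for every ordered pair of distinct vertices $(a,b)$ of $QR_7$, there is a homomorphism $h:G\to QR_7$ with $h(s)=a$ and $h(t)=b$.} (One should double-check the degenerate base case forces $a\neq b$ and that $QR_7$ is a tournament so every ordered pair of distinct vertices is either an arc or a ``reverse arc,'' which will matter below.)

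First I would handle the base case: $G=(u,v)$ a single arc. Given a target pair $(a,b)$ with $a\neq b$ in $QR_7$, I need $h(u)=a$, $h(v)=b$ with $(a,b)\in E(QR_7)$. Since $QR_7$ is a tournament, exactly one of $(a,b)$, $(b,a)$ is an arc, so this does \emph{not} work for every pair — which tells me the inductive hypothesis must instead be: for every arc $(a,b)$ of $QR_7$ there is such a homomorphism, \emph{and} additionally I should record that $QR_7$ is arc-transitive (indeed vertex-transitive and, by its circulant construction from the quadratic residues $\{1,2,4\}$, arc-transitive), so that the choice of terminal-arc is immaterial up to automorphism. Then for the series composition $G=G_1\times G_2$: pick any directed path of length $2$ in $QR_7$, say $a\to c\to b$ (these exist since out-degree and in-degree are $3$), apply the hypothesis to $G_1$ with terminals mapped to $(a,c)$ and to $G_2$ with terminals $(c,b)$; the two homomorphisms agree on the identified vertex, so they glue, and the result sends the terminals of $G$ to the arc-pair obtained from $a$ and $b$. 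For the parallel composition $G=G_1\cup G_2$: apply the hypothesis to both $G_1$ and $G_2$ with the \emph{same} target pair $(a,b)$ for their (now identified) terminals; the homomorphisms glue on the two identified vertices.

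The main obstacle is verifying that the glued map is a genuine oriented coloring, i.e. that it satisfies the second condition of Definition~\ref{def-oc} (no two arcs $(u,v),(x,y)$ with $c(u)\ne c(y)$ but $c(v)=c(x)$ appearing with the ``wrong'' orientation) — equivalently, that a map into $QR_7$ which is an arc-preserving homomorphism really is an oriented coloring. This is automatic whenever the target graph itself is an oriented graph (no digons), because then $h$ arc-preserving already forbids both a monochromatic arc and the forbidden $2$-arc configuration: if $(u,v)$ and $(x,y)$ are arcs with $h(v)=h(x)$, then $h$ maps them to arcs $(h(u),h(v))$ and $(h(v),h(y))$ of $QR_7$, and since $QR_7$ has no digon we get $h(u)\ne h(y)$ only if... — here one must be slightly careful: the condition to rule out is exactly $h(u)=h(y)$ being allowed \emph{together with} the two arcs, which would require $(h(u),h(v))$ and $(h(v),h(u))$ both to be arcs, i.e. a digon, which $QR_7$ lacks. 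So the key external fact is simply that $QR_7$ is an oriented graph (no $2$-cycles), which is clear from $j-i\equiv 1,2,4\pmod 7$ never simultaneously holding for $(i,j)$ and $(j,i)$ since $\{1,2,4\}$ and $\{-1,-2,-4\}=\{6,5,3\}$ are disjoint mod $7$. The second, more bookkeeping-heavy obstacle is making the induction statement exactly right so that both compositions go through uniformly; I expect the correct formulation to be the arc-transitivity version above, and once that is pinned down the two inductive steps are short. Finally, applying the strengthened statement to the whole esp-expression of $G$ (with any fixed terminal arc of $QR_7$) yields a homomorphism $G\to QR_7$, hence $\chi_o(G)\le 7$, as claimed. $\Box$
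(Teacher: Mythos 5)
Your overall strategy coincides with the paper's: color into the quadratic-residue tournament $QR_7$ and induct on the esp-expression while controlling the colors of the two terminals, handling the parallel composition by giving both parts the same terminal pair and the series composition by inserting a middle color; the paper phrases this top-down as an invariant (the terminal pair is always mapped to an arc of the color graph, see the proof of Theorem~\ref{esp-bd7}) rather than bottom-up over all target arcs, but that difference is cosmetic. Your observation that an arc-preserving map into a digon-free target is automatically an oriented coloring in the sense of Definition~\ref{def-oc} is also exactly the standard fact the paper uses implicitly via the homomorphism formulation.

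The one step that does not hold up as written is the series composition. Your strengthened hypothesis requires realizing a \emph{given} target arc $(a,b)$ of $QR_7$ on the terminals of $G_1\times G_2$, so you need, for that particular arc, a vertex $c$ with $(a,c)$ and $(c,b)$ both arcs. ``Pick any directed path of length two'' is not enough: for an arbitrary $2$-dipath $a\to c\to b$ the difference $b-a$ can be $3,5$ or $6 \pmod 7$, so $(a,b)$ need not be an arc at all, and then you have not produced what the enclosing induction step needs. Nor does the degree count (out- and in-degree $3$) force $N^+(a)\cap N^-(b)\neq\emptyset$ for a given arc $(a,b)$: after removing $a$ and $b$ these sets have sizes $2$ and $2$ inside a pool of $5$ vertices, so pigeonhole gives nothing, and the claim is genuinely false for other regular tournaments — in the circulant tournament on $\mathbb{Z}_7$ with difference set $\{1,2,3\}$ the arc $(0,1)$ extends to no directed $2$-path from $0$ to $1$, since $1$ is not a sum of two elements of $\{1,2,3\}$ modulo $7$. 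What is needed is the specific property of $QR_7$ that every arc is the shortcut of a directed $2$-path with the same endpoints, which the paper checks explicitly as the three cases of Figure~\ref{F05} (equivalently $2\equiv 1+1$, $4\equiv 2+2$, $1\equiv 4+4 \pmod 7$). Your appeal to arc-transitivity would also repair this (one explicit check, e.g.\ $1\to 5\to 2$ for the arc $(1,2)$, transported by automorphisms), but you neither perform that check nor justify arc-transitivity, so the key combinatorial fact about $QR_7$ is currently asserted for an invalid reason. Once this fact is supplied, the base case, the parallel gluing and the final conclusion $\chi_o(G)\le 7$ all go through and your proof is essentially the paper's.
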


\begin{proof}
Let $G=(V_G,E_G)$ be some series-parallel digraph. We use the color graph $H=(\{1,2,3,4,5,6,7\},E_H)$
shown in Figure \ref{F06} to define an oriented $7$-vertex-coloring $c:V_G\to \{1,\ldots,7\}$ for $G$.

First, we color the source of $G$ by $1$ and the sink of $G$ by $2$.
Next, we recursively decompose $G$ in order to color all vertices of $G$.
In any step we will keep the invariant that $(c(q),c(s))\in E_H$, if
$q$ is the source of $G$ and $s$ is the sink of $G$.

\begin{itemize}
\item If $G$ emerges from parallel composition $G_1 \cup G_2$, we proceed
with coloring $G_1$ and $G_2$ on its own. Doing so, the color of the source
and sink in $G_1$ and $G_2$ will not be changed.

\item If $G$ emerges from series composition $G_1 \times G_2$, let $a$ be
the color of the source and $c$ be the color of the sink in $G$. For
the sink of $G_1$ and the source of $G_2$ we choose color $b$, such that
the arcs $(a,b)$ and $(b,c)$ are in color graph $H$. This is always possible
by the three possible cases shown in Figure \ref{F05}. For every (red) arc there
is a path of length two (blue) with the same start and end vertex.

\begin{figure}[hbtp]
\centerline{\includegraphics[width=0.99\textwidth]{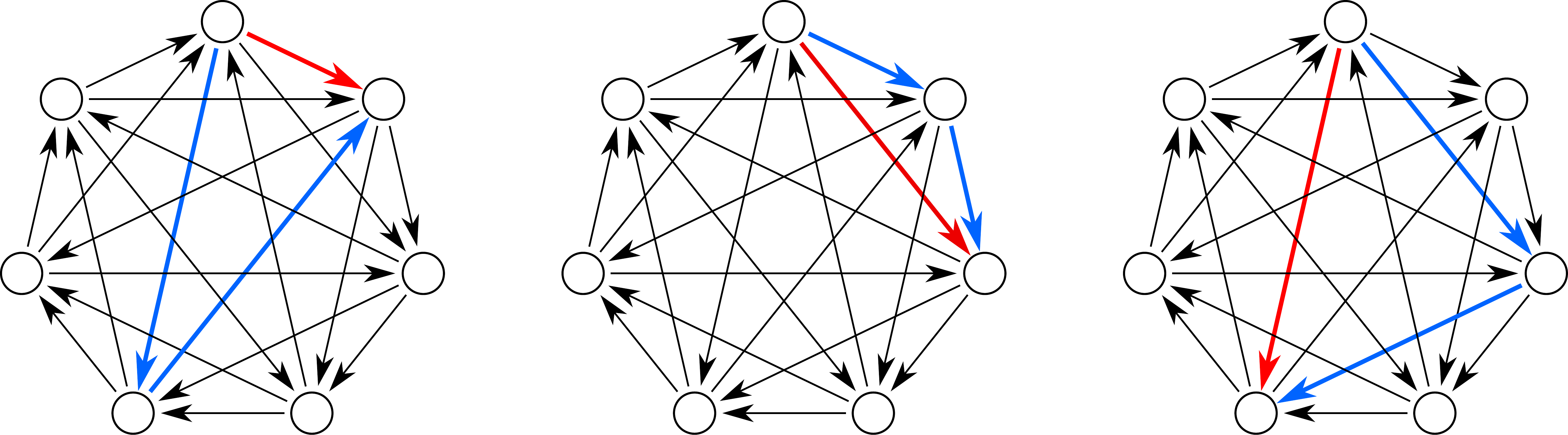}}
\caption{Three possible cases for arcs in the color graph used in the proof of Theorem  \ref{esp-bd7}.}
\label{F05}
\end{figure}

\item If $G$ consists of a pair of vertices connected by a single arc,
the coloring is given by our invariant.
\end{itemize}
This shows the statement of the theorem.
\end{proof}

For the optimality of the shown bound, we give the following
example.

\begin{example}\label{ex-six}
The esp-expression
$$
\begin{array}{lcl}
X_3  &=& ((v_1,v_4)\cup((v_1,v_2)\times((v_2,v_4)\cup ((v_2,v_3)\times (v_3,v_4)))))   \times \\
   & & ((((v_4,v_6)\cup((v_4,v_5)\times (v_5,v_6))) \times (v_6,v_7))\cup (v_4,v_7))
\end{array}
$$
defines the esp-digraph on 7 vertices shown in Figure \ref{F09} and it
obviously holds that $\chi_o(\g(X_3))=7$.
This implies that the bound of Theorem \ref{esp-bd7} is best possible.
\end{example}

\begin{figure}[hbtp]
\centerline{\includegraphics[width=0.4\textwidth]{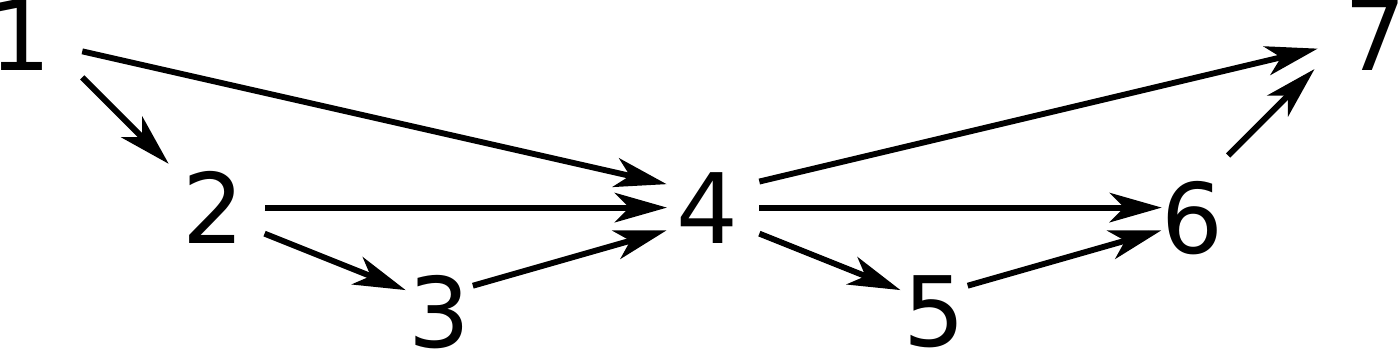}}
\caption{$\gr(X_3)$ in Example   \ref{ex-six}.}
\label{F09}
\end{figure}

In order to compute the  oriented
chromatic number of an esp-digraph $G=(V,E)$ defined by an esp-expression $X$,
we recursively compute the set $F(X)$ of all triples $(H,\ell,r)$ such that
$H$ is a color graph for $G$, where $\ell$ and $r$ are the
colors of the source and sink, respectively, in $G$ with respect
to the coloring by $H$.
The number of vertex labeled, i.e., the vertices are distinguishable
from each other, oriented graphs  on $n$
vertices is  $3^{\nicefrac{n(n-1)}{2}}$.
By Theorem \ref{esp-bd7} and also by Corollary \ref{cor-78} we can conclude that
$$|F(X)|\leq 3^{\nicefrac{7(7-1)}{2}}\cdot 7 \cdot 7\in \bigo(1)$$
which is independent of the size of $G$.

For two color graphs $H_1=(V_1,E_1)$ and  $H_2=(V_2,E_2)$
we define $H_1+H_2=(V_1\cup V_2,E_1\cup E_2)$.

\begin{lemma}
\label{le1}
\begin{enumerate}
\item
For every $(u,v)\in E$ it holds  $$F((u,v))=\{((\{i,j\},\{(i,j)\} ),i,j) \mid 1\leq i,j \leq 7,~ i\neq j\}.$$

\item
For every two esp-expressions $X_1$ and  $X_2$ we obtain $F(X_1\cup X_2)$
from  $F(X_1)$ and $F(X_2)$ as follows. For every $(H_1,\ell_1,r_1) \in F(X_1)$ and
every $(H_2,\ell_2,r_2) \in F(X_2)$ such that graph $H_1+H_2$ is oriented, $\ell_1=\ell_2$,
and $r_1=r_2$, we put $(H_1 + H_2,\ell_1,r_1)$ into $F(X_1\cup X_2)$.

\item
For every two esp-expressions $X_1$ and  $X_2$ we obtain $F(X_1\times X_2)$
from  $F(X_1)$ and $F(X_2)$ as follows. For every $(H_1,\ell_1,r_1) \in F(X_1)$ and
every $(H_2,\ell_2,r_2) \in F(X_2)$ such that graph $H_1+H_2$ is oriented,  and $r_1=\ell_2$, we
put $((V_1\cup V_2,E_1\cup E_2),\ell_1 ,r_2)$  into $F(X_1\times X_2)$.
\end{enumerate}
\end{lemma}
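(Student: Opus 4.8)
The plan is to verify each of the three items by unwinding the definition of $F(X)$ as the set of all triples $(H,\ell,r)$ where $H$ is a color graph for $\g(X)$ (equivalently, there is a homomorphism $h$ from $\g(X)$ into $H$) and $\ell=h(q)$, $r=h(s)$ for the source $q$ and sink $s$ of $\g(X)$. In each case the argument splits into a soundness direction (every triple produced by the construction really belongs to $F$) and a completeness direction (every triple in $F$ is produced). Item~1 is essentially a sanity check: $\g((u,v))$ is the single-arc digraph, a homomorphism into a color graph is just a choice of a $2$-vertex oriented subgraph $(\{i,j\},\{(i,j)\})$ together with the images $i=\ell$ of $u$ and $j=r$ of $v$, and since every oriented graph $H$ admits its induced subgraph on $\{i,j\}$ as a color graph, the triple is valid exactly when $i\ne j$; one should also note that we may restrict to colors in $\{1,\dots,7\}$ by Theorem~\ref{esp-bd7}, which is why only $i,j\le 7$ appears.

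For item~2, I would argue as follows. For soundness: if $(H_1,\ell_1,r_1)\in F(X_1)$ and $(H_2,\ell_2,r_2)\in F(X_2)$ with $\ell_1=\ell_2$, $r_1=r_2$, and $H_1+H_2$ oriented, take the homomorphisms $h_1:\g(X_1)\to H_1$ and $h_2:\g(X_2)\to H_2$ witnessing membership. In the parallel composition $\g(X_1\cup X_2)$ the sources are identified and the sinks are identified, so the condition $\ell_1=\ell_2$ and $r_1=r_2$ is exactly what is needed for $h_1$ and $h_2$ to agree on the two shared vertices; hence they glue to a single map $h:\g(X_1\cup X_2)\to H_1+H_2$, which is arc-preserving, and $H_1+H_2$ is a legal color graph since it is oriented. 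The source and sink still map to $\ell_1$ and $r_1$, so $(H_1+H_2,\ell_1,r_1)\in F(X_1\cup X_2)$. For completeness: given any $(H,\ell,r)\in F(X_1\cup X_2)$ with witness $h$, restrict $h$ to the vertex sets of $\g(X_1)$ and $\g(X_2)$, let $H_i$ be the subgraph of $H$ induced by the image, obtaining $(H_1,\ell,r)\in F(X_1)$ and $(H_2,\ell,r)\in F(X_2)$; then $H_1+H_2$ is a subgraph of $H$, hence oriented, and the construction puts $(H_1+H_2,\ell,r)$ into $F(X_1\cup X_2)$, while any triple with color graph a subgraph of a valid one is itself valid (a homomorphism into a subgraph of $H$ is a homomorphism into $H$, and conversely the restriction argument). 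Item~3 is the same argument with the bookkeeping changed: in the series composition only the sink of $\g(X_1)$ and the source of $\g(X_2)$ are identified, so the gluing condition is $r_1=\ell_2$, the combined color graph is $(V_1\cup V_2,E_1\cup E_2)$, and the resulting terminal colors are $\ell_1$ (inherited from $X_1$) and $r_2$ (inherited from $X_2$).

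The step I expect to require the most care is making precise the equivalence ``$(H,\ell,r)\in F(X)$ iff some \emph{subgraph} of $H$ arises from the construction'': one must be careful that $F$ is closed under passing to color graphs that are induced subgraphs of valid ones on the image of a homomorphism, and also that taking $H_1+H_2$ (a not-necessarily-induced union) still yields a valid color graph — this is fine because it is enough that $H_1+H_2$ be oriented and contain the homomorphic image. A second subtlety is the oriented-ness check: $H_1+H_2$ can fail to be oriented even when $H_1$ and $H_2$ individually are, if they contribute opposite arcs on a shared pair of colors, which is exactly why the condition ``$H_1+H_2$ is oriented'' is imposed explicitly in the lemma; I would flag that this is the only way the merge can fail. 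Everything else is a routine gluing of homomorphisms along identified terminal vertices, driven by the recursive definition of esp-digraphs in Definition~\ref{def-esp}.
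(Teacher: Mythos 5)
Your proposal is correct and follows essentially the same route as the paper's proof: soundness by gluing the two colorings (homomorphisms) along the identified terminals, which is exactly what the conditions $\ell_1=\ell_2$, $r_1=r_2$ (resp.\ $r_1=\ell_2$) and the orientedness of $H_1+H_2$ guarantee, and completeness by restricting a coloring of the composed digraph to $\g(X_1)$ and $\g(X_2)$. The subtleties you flag (opposite arcs making $H_1+H_2$ non-oriented, and keeping all, not just optimal, sub-solutions) are precisely the points the paper also addresses around Lemma~\ref{le1}.
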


\begin{proof}
We show for each operation that the stated formulas  hold.
\begin{enumerate}

\item Set $F((u,v))$ includes obviously all possible solutions to color the end vertices of every arc
on its own with the  $7$ given colors.

\item Set $F(X_1)$ includes all possible solutions for coloring $X_1$, just as $F(X_2)$ for $X_2$. In particular we have solutions included, that are equal but a permutation of the colors. Since the sources and sinks are each identified with each other, we only keep solutions where $\ell_1= \ell_2$ and $r_1=r_2$. In this step it is essential that we kept all possible solutions before, even if they are just permutations of the different colors.
$H_1+H_2$ is oriented and has by construction at most 7 vertices. Since in $\g(X_1 \cup X_2)$ there are no additional edges compared to $E_1\cup E_2$, every vertex can get the same color as in the individual solutions, such that all vertices are legally colored. So $(H_1 + H_2,\ell_1 ,r_1)$ is an possible solution to color and thus $(H_1 + H_2,\ell_1,r_1)\in F(X_1\cup X_2)$.

Let $(H,\ell ,r)\in F(X_1 \cup X_2)$, then we can take an induced subdigraph $H_1$ which colors all the vertices of $\g(X_1)$ as well as $H_2$ which colors all the vertices of $H_2$. Let $\ell_1 = \ell$ be the color of the source in $\g(X_1)$ and $r_1 = r$ the color of the sink in $\g(X_2)$.
It holds that $(H_1,\ell_1,r_1)\in F(X_1)$. The same arguments hold for $X_2$ such that $(H_2,\ell_2,r_2)\in F(X_2)$.

\item Set $F(X_1)$ includes all possible solutions for coloring $X_1$, just as $F(X_2)$ for $X_2$.
In particular we have solutions included, that are equal but a permutation of the colors. Since the source and the sink are identified with each other, we only keep solutions where $r_1=\ell_2$. In this step it is essential that we kept all possible solutions before, even if they are just permutations of the different colors.
$H_1+H_2$ is oriented and has by construction at most 7 vertices. In $\g(X_1\times X_2)$ there no additional edges compared to $E_1\cup E_2$, every vertex can get the same color as in the individual solutions, such that all vertices are legally colored. So $(H_1 + H_2, \ell_1, r_2)$ is a possible solution and thus, $(H_1 + H_2, \ell_1, r_2) \in F(X_1\times X_2)$.

Let $(H,\ell,r)\in F(X_1\times X_2)$, then we can take an induced subdigraph $H_1$ which colors all the vertices of $\g(X_1)$ as well as $H_2$ which colors all the vertices of $H_2$. Let $\ell_1 = \ell$ be the color of the source in $\g(X_1)$ and $r_1$ the color of the sink in $\g(X_2)$.
It holds that $(H_1,\ell_1,r_1)\in F(X_1)$. The same arguments hold for $X_2$, if $\ell_2$ is the color of the source of $\g(X_1)$ and $r_1=r$ is the color of the sink of $\g(X_2)$, such that $(H_2,\ell_2,r_2)\in F(X_2)$.
\end{enumerate}
This shows the statements of the lemma.
\end{proof}

The optimal solution for digraph $G$ given by an esp-expression $X$ is always included in $F(X)$ since all possible sub-solutions are maintained in the process and not only the optimal solutions. We can show this shortly by contradiction. Assumed there exists an optimal solution $(H,\ell, r)$ for $X_1\times X_2$, but $(H,\ell, r)\not\in F(X_1\times X_2)$ such that $(H,\ell, r)$ was not taken into the solution. Thus, for either $X_1$ or $X_2$ (which we call $X_i$ in the following), the solution of coloring the vertices with color graph $H'$, which is an induced subdigraph of $H$ and which only contains the colors we need for coloring $X_i$, was not part of the solution $F(X)$. But since there are all the possible solutions in $F(X)$ and not only minimal solutions, this is a contradiction to our procedure. The same holds for the parallel composition $X_1\cup X_2$. Thus, we find a minimal coloring for $G$.
We know from Theorem \ref{sop-bd7} that the number of colors in an optimal solution is limited by 7.

\begin{corollary}\label{cor1}
There is an oriented vertex $r$-coloring for an esp-digraph $G$
which is given by an esp-expression $X$ if and only if there is some $(H,\ell,r)\in F(X)$
such that color graph $H$ has $r$ vertices.
Therefore, $\chi_o(G)=\min\{|V| \mid ((V,E),\ell,r)\in F(X)\}$.
\end{corollary}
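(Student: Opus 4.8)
The plan is to prove both directions of the equivalence directly from Lemma \ref{le1} together with the observation already spelled out in the paragraph preceding the corollary, namely that $F(X)$ collects \emph{all} valid colorings of $\g(X)$ (indexed by their color graph and by the colors of source and sink), not merely the optimal ones. First I would fix the claim to be proved: writing $G=\g(X)$, I want to show that $G$ admits an oriented $r$-vertex-coloring whose color graph has exactly $r$ vertices if and only if there is a triple $(H,\ell,r')\in F(X)$ with $|V(H)|=r$; here I am careful that the ``$r$'' in ``$(H,\ell,r)$'' of the statement is the \emph{sink color}, an unfortunate clash of letters, so in the write-up I would rename the third component to avoid confusion.

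For the ``if'' direction I would argue by induction on the structure of the esp-expression $X$, using Lemma \ref{le1} as the inductive step: if $(H,\ell,r_s)\in F(X)$ then $H$ is by construction an oriented graph on at most $7$ vertices, and unwinding the recursive definition of $F$ one obtains a concrete mapping $c\colon V_G\to V(H)$ assigning to each vertex the color it received in the sub-solutions that were combined to form $(H,\ell,r_s)$; since neither the parallel nor the series composition creates arcs outside $E_1\cup E_2\subseteq E(H)$, the map $c$ is a homomorphism from $G$ to $H$, hence an oriented $|V(H)|$-vertex-coloring. For the ``only if'' direction I would take an arbitrary oriented $r$-vertex-coloring $c$ of $G$ realized by a color graph $H_0$ on $r$ vertices (w.l.o.g.\ $H_0$ is exactly the image of $c$ together with the induced arcs, so $|V(H_0)|=r$), and again descend the esp-tree: restricting $c$ to the vertices of a sub-expression $X'$ and taking the induced subdigraph of $H_0$ on the used colors yields, by the second (``only if'') halves of the three cases in the proof of Lemma \ref{le1}, a triple belonging to $F(X')$; pushing this up to the root gives $(H_0,\ell,r_s)\in F(X)$ with $|V(H_0)|=r$. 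Taking the minimum over all triples in $F(X)$ of the number of vertices of the color graph then gives the displayed formula $\chi_o(G)=\min\{|V|\mid ((V,E),\ell,r_s)\in F(X)\}$, and finiteness of this minimum is guaranteed by Theorem \ref{esp-bd7} (equivalently Corollary \ref{cor-78}), which ensures $F(X)\neq\emptyset$ and that the minimum is at most $7$.

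The main obstacle, and the point I would treat most carefully, is the bookkeeping in the ``only if'' direction: one must be sure that when the coloring $c$ of the whole graph is restricted to a sub-expression, the resulting color graph really is an \emph{induced} subdigraph of $H_0$ on the used color set and that the source/sink colors match up across an identification of terminals — this is exactly where the remark ``it is essential that we kept all possible solutions before, even if they are just permutations of the different colors'' does its work, since the triple recorded for a subtree need not be the ``canonical'' one but only \emph{some} valid one, and permutations of colors are needed to make the endpoint colors agree at a composition node. Everything else — the fact that $|V(H)|\le 7$, that $H$ stays oriented, that no spurious arcs appear — is already established inside the proof of Lemma \ref{le1}, so I would simply cite it rather than redo it. The verification that $\chi_o(G)$ equals the stated minimum is then immediate: any coloring gives a triple whose color graph has the same number of colors, and any triple gives a coloring with that many colors, so the optimum over colorings equals the optimum over $F(X)$.
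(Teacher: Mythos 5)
Your proposal is correct and follows essentially the same route as the paper: it rests on the soundness and completeness of the recursion from Lemma \ref{le1} (every triple in $F(X)$ yields a valid coloring, and every coloring of $\g(X)$ restricts to triples of the sub-expressions) together with the bound of $7$ from Theorem \ref{esp-bd7}, which is exactly the argument the paper sketches in the paragraph preceding the corollary. Your version merely makes the induction over the esp-tree and the two directions explicit, and sensibly flags the clash of the letter $r$ with the sink color.
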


\begin{theorem}\label{esp-num-com}
Let $G$ be an esp-digraph. Then, the oriented
chromatic number of $G$ can be computed in linear time.
\end{theorem}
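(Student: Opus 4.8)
The plan is to combine the recursive data structure $F(X)$ from Lemma~\ref{le1} with the linear-time construction of an esp-tree from \cite{Val78}, and argue that a bottom-up pass over this tree computes $F(X)$ in time linear in the size of $G$. First I would invoke \cite{Val78} to obtain, in time $\bigo(n+m)$, an esp-tree $T$ for the input digraph $G=(V,E)$; note that since $G$ is an esp-digraph, $|E|=m$ leaves and $T$ has $\bigo(m)$ nodes altogether (each internal node has two children, so the number of internal nodes is $m-1$). Then I would process $T$ bottom-up: at each leaf $(u,v)$ set $F((u,v))$ according to Lemma~\ref{le1}(1), and at each internal node labeled $\cup$ or $\times$ compute $F$ from the $F$-sets of its two children via the rules in Lemma~\ref{le1}(2)--(3). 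Once $F(X)$ is available at the root, Corollary~\ref{cor1} gives $\chi_o(G)=\min\{|V| \mid ((V,E),\ell,r)\in F(X)\}$, which is read off in constant time.

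The correctness of this scheme is already established: Lemma~\ref{le1} shows the recursive formulas for $F$ are correct, and the discussion following it (together with Corollary~\ref{cor1}) shows the optimal coloring is always retained. So what remains is the running-time analysis. The key point is the bound, derived earlier in the excerpt from Theorem~\ref{esp-bd7}, that
$$|F(X)| \;\leq\; 3^{\nicefrac{7(7-1)}{2}}\cdot 7\cdot 7 \;\in\; \bigo(1),$$
i.e. every $F$-set has size bounded by an absolute constant independent of $|G|$. Consequently each leaf is handled in $\bigo(1)$ time, and at each internal node the computation of $F(X_1\cup X_2)$ or $F(X_1\times X_2)$ ranges over all pairs $(H_1,\ell_1,r_1)\in F(X_1)$, $(H_2,\ell_2,r_2)\in F(X_2)$ — at most $\bigo(1)\cdot\bigo(1)=\bigo(1)$ pairs — and for each pair performs only constant-size tests (checking $\ell_1=\ell_2$, $r_1=r_2$ or $r_1=\ell_2$, and whether $H_1+H_2$ is oriented and has at most $7$ vertices) and a constant-size union operation. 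Hence each of the $\bigo(m)$ nodes of $T$ costs $\bigo(1)$, and the whole bottom-up pass runs in $\bigo(m)$ time; adding the $\bigo(n+m)$ for building $T$ yields a total of $\bigo(n+m)$, which is linear.

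The main obstacle — and it is a mild one, since the heavy lifting is already done in Lemma~\ref{le1} and the preceding size bound — is to state carefully that the constant hidden in $|F(X)|\in\bigo(1)$ is genuinely independent of $G$, so that the per-node cost does not grow with the input; this is exactly what Theorem~\ref{esp-bd7} (equivalently Corollary~\ref{cor-78}) guarantees, because it caps the number of colors at $7$ and hence caps the vertex set of every color graph appearing in $F(X)$ at seven vertices. One should also remark that, although $F(X)$ as defined stores color graphs up to the identity of their vertex labels (which is why all color-permuted solutions are kept, as emphasized after Lemma~\ref{le1}), the number of such labeled oriented graphs on at most $7$ vertices is still the finite quantity $3^{\nicefrac{7\cdot6}{2}}$, so the constant bound survives. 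With these observations in place the proof is a one-line appeal to Lemma~\ref{le1}, Corollary~\ref{cor1}, the $\bigo(1)$ size bound, and \cite{Val78}.
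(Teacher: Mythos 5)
Your proposal is correct and matches the paper's own proof: both build the esp-tree in $\bigo(n+m)$ time via \cite{Val78}, compute the constant-size sets $F(X_u)$ bottom-up by the rules of Lemma~\ref{le1}, and read off $\chi_o(G)$ at the root via Corollary~\ref{cor1}. Your accounting of the tree size in terms of the $m$ arcs (leaves) is if anything slightly more careful than the paper's, but the argument is the same.
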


\begin{proof}
Let $G=(V,E)$ be an esp-digraph with $n=|V|$ vertices and $m=|E|$ edges and let  $T$ be an esp-tree for $G$
with root $r$. For a vertex $u$ of $T$ we denote by $T_u$
the subtree rooted at $u$ and by $X_u$ the esp-expression defined by $T_u$.

For computing the oriented
chromatic number for an esp-digraph $G$, we traverse esp-tree $T$ in bottom-up order.
For every vertex $u$ of $T$ we can compute $F(X_u)$ by
following the rules given in Lemma \ref{le1}. By Corollary \ref{cor1} we can solve our
problem using $F(X_r)=F(X)$.

An esp-tree $T$ can be computed in $\bigo(n+m)$ time from $G$, see \cite{Val78}.
By Lemma \ref{le1} we obtain the following running times.
\begin{itemize}
\item
For every vertex $(u,v)\in E$ set  $F((u,v))$ is computable in $\bigo(1)$ time.

\item
For every two esp-expressions $X_1$ and $X_2$  set
$F(X_1  \cup X_2)$   can be computed in  $\bigo(1)$ time from
$F(X_1)$ and $F(X_2)$.

\item
For every two esp-expressions $X_1$ and $X_2$  set
$F(X_1  \times X_2)$ can be computed in   $\bigo(1)$ time from $F(X_1)$ and $F(X_2)$.
\end{itemize}

Since $T$ consists of $n$ leaves and $n-1$ inner vertices,
the overall running time is in $\bigo(n+m)$. 
\end{proof}

\section{Minimal Vertex Series-Parallel Digraphs}\label{sol-co}

We recall the definition of minimal\footnote{In order to motivate the notation of {\em minimal} vertex series-parallel
graphs we refer to a super class of series-parallel digraphs, which are
are exactly the digraphs whose transitive closure equals
the transitive closure of a minimal series-parallel digraph \cite{VTL82}.} vertex
series-parallel digraphs from \cite{BG18} which are
based on \cite{VTL82}.

\begin{definition}[Minimal Vertex Series-Parallel Digraphs]\label{def-msp}
The class of {\em minimal vertex series-parallel digraphs}, {\em msp-digraphs} for
short,  is recursively defined as follows.

\begin{enumerate}[(i)]
\item Every digraph on a single vertex $(\{v\},\emptyset)$,
denoted by $v$, is a {\em minimal vertex series-parallel digraph}.

\item If  $G_1=(V_1,E_1)$ and $G_2=(V_2,E_2)$  are vertex-disjoint minimal
vertex series-parallel digraphs and
$O_1$ is the set of vertex of outdegree $0$ (set of sinks) in $G_1$ and
$I_2$ is the set of vertices of indegree $0$ (set of sources) in $G_2$, then
\begin{enumerate}
\item
the  {\em parallel composition}
$G_1 \cup G_2 =(V_1\cup V_2, E_1\cup E_2)$
is a {\em minimal vertex series-parallel digraph} and

\item
the {\em series composition}
$G_1 \times G_2=(V_1\cup V_2,E_1\cup E_2\cup (O_1 \times I_2))$
is a {\em minimal vertex series-parallel digraph}.
\end{enumerate}
\end{enumerate}
\end{definition}

An expression $X$ using  the  operations of Definition \ref{def-msp}
is called an {\em msp-expression} and  $\g(X)$ the defined graph. We illustrate such an expression with the following example.

\begin{example}\label{ex-msp}
The msp-expression
$$
X_4=  \left(v_1\times \left(\left(v_2\times \left(v_3\times v_4 \right)\right)\cup v_5\right)\right) \times v_6
$$
defines the  msp-digraph shown in Figure \ref{F02}.
\end{example}

\begin{figure}[hbtp]
\centerline{\includegraphics[width=0.5\textwidth]{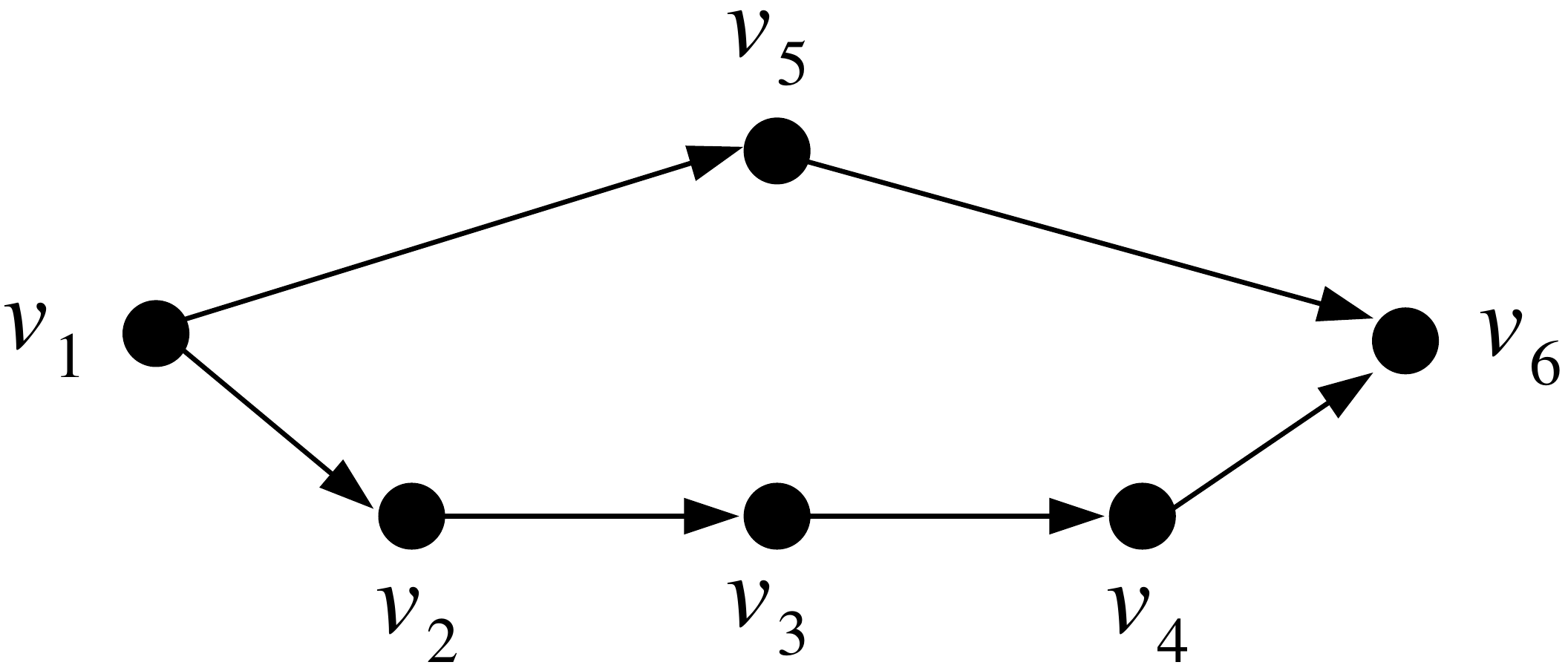}}
\caption{$\gr(X_4)$ in Example \ref{ex-msp}.}
\label{F02}
\end{figure}

Several classes of digraphs are included in the set of all msp-digraphs.

\begin{example}\label{ex-2-msp}
\begin{enumerate}
\item Every oriented bipartite graph $\overrightarrow{K_{n,m}}$ is an msp-digraph by the following msp-expression.
$$X_{K_{n,m}}= (v_1\cup \ldots \cup v_n)\times (w_1\cup \ldots \cup w_m) $$

\item Every  in- and out-rooted tree $T$ is an msp-digraph. An  
msp-expression $X$ can be obtained by inserting the vertices of $T$ into $X$ using a bottom-up order. 
We denote by $X_{v_i}$ the msp-expression for the subtree of $T$ rooted at $v_i$. 
For every leaf $v_i$ of $T$ we obviously have the msp-expression $X_{v_i}=v_i$. 
For every inner vertex $v_i$ with successors $v_{j_1},\ldots,v_{j_i}$ in $T$
the expressions $X_{v_{j_1}},\ldots,X_{v_{j_i}}$ are first combined by 
parallel compositions to sub-expression $X'_{v_i}$ and afterwards $v_i$ is combined with  $X'_{v_i}$ using
a series composition to obtain sub-expression $X_{v_i}$.
\end{enumerate}
\end{example}

For every msp-digraph we can define a tree structure,
denoted as {\em msp-tree}. The
vertices of the graph are represented by the leaves of the msp-tree. Meanwhile, the inner nodes of the msp-tree correspond
to the operations which are applied on the sub-expressions defined by the subtrees.
For a vertex $u$ of msp-tree $T$ we denote by $T(u)$
the subtree  which is rooted at $u$ and by $X(u)$ the {\em sub-expression}
defined by $T(u)$.

For every msp-digraph we can construct a msp-tree in linear time, see
\cite{VTL82}.

Further, there is a close relation between esp-digraphs
and msp-digraphs, which the following Lemma shows.

\begin{lemma}[\cite{VTL82}]\label{le-vtl}
An acyclic multidigraph $G$ with a single source and a single sink is an
esp-digraph if and
only if its line digraph $LD(G)$ is a  msp-digraph.
\end{lemma}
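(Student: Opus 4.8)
The statement to prove is Lemma~\ref{le-vtl}: an acyclic multidigraph $G$ with one source and one sink is an esp-digraph if and only if $LD(G)$ is an msp-digraph. Although this is cited from \cite{VTL82}, the natural way to prove it is by a structural induction that mirrors the two recursive definitions, exploiting the fact that both classes are built from the same two operations (parallel and series composition). The key observation is that the line-digraph operator $LD$ translates the \emph{edge} series-parallel operations on $G$ into the \emph{vertex} series-parallel operations on $LD(G)$: an arc of $G$ becomes a vertex of $LD(G)$, the source-arc(s) and sink-arc(s) of an esp-digraph become the sources and sinks of the corresponding msp-digraph, and identifying the sink of $G_1$ with the source of $G_2$ in a series composition of esp-digraphs corresponds exactly to adding all arcs from the sinks of $LD(G_1)$ to the sources of $LD(G_2)$ in the series composition of msp-digraphs.

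\textbf{Forward direction.} First I would show, by induction on the structure of an esp-expression $X$ defining $G$, that $LD(G)$ is an msp-digraph, simultaneously tracking that the sources of $LD(G)$ are precisely the line-digraph vertices corresponding to the arcs of $G$ leaving the source of $G$, and the sinks of $LD(G)$ correspond to the arcs entering the sink of $G$. In the base case $G=(u,v)$ is a single arc, so $LD(G)$ is a single vertex, which is an msp-digraph whose unique source equals its unique sink. For the inductive step, suppose $G = G_1 \cup G_2$ (parallel): since $LD$ acts disjointly on arc sets except where heads meet tails, and the identified source/sink of $G_1,G_2$ create no new $(x,y),(y,z)$ incidences across the two parts (the source has indegree issues only if... — here one uses acyclicity and the single-source/single-sink property to argue no arc of $G_1$ feeds into an arc of $G_2$ through the shared terminals in the parallel case), we get $LD(G) = LD(G_1) \cup LD(G_2)$ as a disjoint union, i.e.\ the parallel composition of msp-digraphs. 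For $G = G_1 \times G_2$ (series), the shared vertex $t$ is the sink of $G_1$ and source of $G_2$; every arc of $G_1$ into $t$ now precedes every arc of $G_2$ out of $t$, so $LD(G)$ is obtained from $LD(G_1) \cup LD(G_2)$ by adding exactly the arcs from the sinks of $LD(G_1)$ to the sources of $LD(G_2)$ — this is precisely the msp series composition $LD(G_1) \times LD(G_2)$. The source/sink bookkeeping follows immediately in each case.

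\textbf{Converse direction.} For the other direction I would argue contrapositively via the same correspondence run backwards, or directly by induction on an msp-expression $Y$ for $LD(G)$ — but the cleaner route is to use the well-known fact (e.g.\ from the theory in \cite{HN60,VTL82}) that the line-digraph operator is injective on digraphs without sources or sinks of a degenerate kind, together with the surjectivity established above: every msp-digraph arises as $LD$ of some esp-digraph. More concretely: given that $LD(G)$ is an msp-digraph, decompose it by an msp-expression; each single-vertex leaf corresponds to an arc, each parallel node to a parallel composition of esp-pieces, each series node to a series composition, and one reconstructs an esp-expression $X$ with $LD(\g(X)) = LD(G)$; since $G$ is acyclic with a single source and single sink, the reconstruction is forced and $\g(X) = G$. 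The acyclicity and single-source/single-sink hypotheses are exactly what rule out the pathological digraphs on which $LD$ fails to be invertible, so they are used essentially here.

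\textbf{Main obstacle.} The delicate point is the invertibility/reconstruction argument in the converse: one must be careful that distinct esp-digraphs can have the same line digraph only when the single-source/single-sink/acyclicity conditions fail, and that under those conditions the decomposition of $LD(G)$ as an msp-expression can always be pulled back to a legal esp-expression (in particular, that a parallel node in the msp-tree never needs two pieces that would force opposite arcs or a shared intermediate vertex in $G$). Handling this cleanly is where the real work lies; the forward direction is essentially a routine unwinding of the definition of $LD$ against the two composition rules. Since the lemma is attributed to \cite{VTL82}, in the paper itself it suffices to cite that reference, but the inductive correspondence sketched above is the proof one would reconstruct.
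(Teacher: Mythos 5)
The paper does not prove Lemma~\ref{le-vtl} at all: it is imported verbatim from \cite{VTL82}, so there is no internal argument to compare yours against. Judged on its own, your forward direction is correct and essentially complete: the bookkeeping that the sinks of $LD(G_1)$ are exactly the vertices corresponding to arcs into the unique sink of $G_1$ (and dually for sources of $LD(G_2)$) is the right invariant, and it is exactly here that the single-source/single-sink property enters; with it, $LD(G_1\cup G_2)$ is the disjoint union and $LD(G_1\times G_2)$ is the msp series composition of $LD(G_1)$ and $LD(G_2)$. The same induction, run over an msp-expression, also gives your surjectivity claim that every msp-digraph is $LD$ of some esp-digraph.

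The gap is in the converse, at the sentence ``the reconstruction is forced and $\g(X)=G$.'' What you actually have at that point is an esp-digraph $G'=\g(X)$ with $LD(G')\cong LD(G)$; concluding $G\cong G'$ requires proving that the line-digraph operator is injective on acyclic multidigraphs with a single source and a single sink, and this is precisely the nontrivial content of the converse, not a citation-free ``well-known fact'' (the classical Harary--Norman statement \cite{HN60} has exceptions exactly at vertices of indegree or outdegree zero, i.e.\ at sources and sinks, so it must be adapted, not merely invoked). To close the gap you would either (a) prove reconstructibility directly: every internal vertex $y$ of $G$ (which, by acyclicity, connectivity and uniqueness of source and sink, has indegree and outdegree at least one) is recoverable from $LD(G)$ as a maximal complete bipartite block $A_y\times B_y$ of its arc set, where $A_y$ and $B_y$ are the arcs into and out of $y$; this partition, together with one added global source and one added global sink, determines $G$ up to isomorphism; or (b) avoid injectivity altogether and induct on an msp-expression of $LD(G)$ showing that each msp parallel/series split of $LD(G)$ induces a corresponding esp split of $G$ itself (e.g.\ for a parallel split, no internal vertex of $G$ can be incident with arcs of both parts, since it has both an in-arc and an out-arc and any mixing would create an arc of $LD(G)$ between the parts; hence the two arc classes meet only in the source and sink of $G$, giving an esp parallel decomposition of $G$, and similarly for series). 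Either route completes the proof; as written, the converse is only a plan with its hardest step asserted.
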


\begin{example}
The msp-digraph $\g(X_4)$, which is defined in Example \ref{ex-msp} and
shown in Figure \ref{F02},
defines the line digraph of esp-digraph $\g(X_1)$ defined in Example \ref{ex-esp}, see Figure \ref{F02e}.
\end{example}

Next we compare the classes of msp-digraphs and esp-digraphs.

\begin{remark}\label{rem-inclu}
The classes of msp-digraphs and esp-digraphs are incomparable in terms of  set inclusion.
Example \ref{ex-2-esp}(2.) gives a class of esp-digraphs which are not msp-digraphs and Example \ref{ex-2-msp}(1.).
gives a class of msp-digraphs which are not esp-digraphs.

\end{remark}

\begin{remark}
In contrast to Remark \ref{rem-esp-ori} on esp-digraphs, 
there are classes of msp-digraphs which are not orientations of a series-parallel graph. This can 
be shown by Example \ref{ex-2-msp}(1.).
The underlying undirected graphs of  oriented bipartite graphs $\overrightarrow{K_{n,m}}$ have unbounded tree-width while the set of series-parallel graphs has  tree-width at most 2 \cite{Bod98}.
\end{remark}

\subsection{Oriented Vertex-Colorings of Minimal Vertex Series-Parallel Digraphs}

Using the recursive structure of msp-digraphs 
we could show the following bound on their oriented chromatic number.

\begin{proposition}[\cite{GKL20,GKL21}]\label{prop-msp-7}
Let $G$ be an msp-digraph. Then, it holds that $\chi_o(G)\leq 7$.
\end{proposition}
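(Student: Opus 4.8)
The plan is to mimic the structure of the proof of Theorem~\ref{esp-bd7}, but now working directly on the recursive decomposition of msp-digraphs from Definition~\ref{def-msp} rather than on esp-expressions. I would again use the color graph $H$ from Figure~\ref{F06} (the tournament $QR_7$ on vertex set $\{1,\dots,7\}$). The key structural fact about msp-digraphs that makes this work is that a series composition $G_1\times G_2$ adds all arcs from the sinks of $G_1$ to the sources of $G_2$; so if in every partial coloring we control only the \emph{set of colors used on the current sources} and the \emph{set of colors used on the current sinks}, we can hope to maintain an invariant that these sets are ``compatible'' in $H$.

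Concretely, the invariant I would carry through the induction is: there is a homomorphism $c$ from $G$ to $H$ such that all sources of $G$ receive one common color $a$ and all sinks of $G$ receive one common color $b$, with $(a,b)$ an arc of $H$ (and in the base case of a single vertex, that vertex is both the unique source and unique sink, so we just pick any color and the arc condition is vacuous — though to make the series step go through we may instead want to allow a source-color/sink-color pair with $a=b$ permitted only in the trivial one-vertex case, or better, track a pair $(a,b)$ with $(a,b)\in E_H$ and treat the single vertex as colorable with $a=b$ handled separately). For the parallel composition $G_1\cup G_2$, I would first relabel the colors used on $G_2$ by an automorphism of $H$ so that $G_2$'s common source color and sink color match those of $G_1$; since all sources of $G_1$ are identified with all sources of $G_2$ (and likewise sinks), and no new arcs are introduced, the two colorings glue into a valid homomorphism with the same source/sink colors. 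For the series composition $G_1\times G_2$, with $G_1$ having source color $a_1$, sink color $b_1$ and $G_2$ having source color $a_2$, sink color $b_2$, I would apply an automorphism of $H$ to $G_2$ so that $a_2$ becomes $b_1$; then the newly added arcs all go from color-$b_1$ vertices to color-$b_1$'s image-successors, which is fine as long as $(b_1, b_2')\in E_H$ for the new sink color $b_2'$ — and crucially I need $(a_1,b_2')\in E_H$ to restore the invariant. This is exactly the ``for every red arc there is a blue path of length two with the same endpoints'' property of $QR_7$ illustrated in Figure~\ref{F05}: given the arc $(a_1,b_1')$ we would have from the invariant, pick $b_1$ as the midpoint of a length-two path from $a_1$ to the desired $b_2'$; working it out, one chooses the automorphism on $G_2$ so that its source maps to $b_1$ and shows a valid $b_2'$ with $(a_1,b_2')\in E_H$ exists.

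The step I expect to be the main obstacle is the bookkeeping in the series composition: one must verify that after applying the color-graph automorphism to $G_2$, \emph{no} forbidden configuration (two arcs $(x,y),(z,w)$ of $G$ with $c(y)=c(z)$ but $(c(x),c(w))\notin E_H$, which is impossible since $H$ is a homomorphic image — so really just that $c$ stays a homomorphism) arises among the freshly added arcs together with old arcs, and that a compatible new pair of terminal colors genuinely exists for \emph{all} three cases of the color-graph arc structure. Since $H$ is vertex-transitive and in fact arc-transitive enough for this (the three cases of Figure~\ref{F05} being the same ones used in Theorem~\ref{esp-bd7}), the existence argument is the same case analysis as there; the extra care is only that msp series composition joins \emph{sets} of sinks to \emph{sets} of sources, but since within each of $G_1,G_2$ all sinks share a color and all sources share a color, every new arc has the same pair of endpoint colors $(b_1, a_2')$, so a single check suffices.

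Finally, since a homomorphism from $G$ to $H$ (a $7$-vertex oriented graph) is exactly an oriented $7$-vertex-coloring, the invariant at the root gives $\chi_o(G)\le 7$, completing the proof. Alternatively — and this is worth noting as a shortcut — Proposition~\ref{prop-msp-7} also follows immediately from Corollary~\ref{cor-77} (or Corollary~\ref{chri-bd7}) together with Lemma~\ref{le-vtl} and Observation~\ref{obs-nu-in}: every msp-digraph $G$ is the line digraph $LD(G')$ of some esp-digraph $G'$, and $\chi_o(LD(G'))=\chi'_o(G')\le 7$; but the direct recursive argument above is the one that also underlies the linear-time algorithm.
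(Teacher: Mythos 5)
Your closing shortcut is fine, but your primary, direct recursive argument has a genuine gap. A small point first: in Definition~\ref{def-msp} the parallel composition of msp-digraphs is a \emph{disjoint union}, not an identification of terminals (that is the esp-operation of Definition~\ref{def-esp}); this slip by itself is harmless, since arc-transitivity of $QR_7$ lets you permute the colors of one part before gluing. The real problem is the invariant itself: ``all sources of $G$ receive one common color $a$, all sinks one common color $b$, with $(a,b)\in E_H$'' cannot be maintained, because an msp-digraph may contain a vertex that is simultaneously a source and a sink. Already $G=v\cup(w\times x)$ is an msp-digraph in which $v$ is isolated, hence both a source and a sink; your invariant forces $c(v)=a=b$, contradicting $(a,b)\in E_H$ (an oriented color graph has no loops). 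Your tentative patch (allow $a=b$ only for the one-vertex graph) does not repair this, because the degenerate pair propagates: a parallel composition of a part with $a=b$ (one containing an isolated vertex) with a nondegenerate part carrying an arc $(a_2,b_2)$, $a_2\neq b_2$, can never be recolored so that all sources share one color and all sinks share one color, since that would force $a_2=b_2$. So the induction breaks exactly at the configurations that the cited proofs must handle --- the same complication that the dynamic program of Lemma~\ref{le-e-msp} treats explicitly via the $\emptyset$-cases for isolated vertices. A correct direct proof needs a weaker or richer invariant (for instance, tracking which source-color/sink-color combinations are realizable, rather than insisting on a single compatible pair), and establishing that is the actual content of the proof in the cited works; your sketch does not yet supply it.

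The alternative you mention at the end is sound and is in fact exactly how the present paper justifies the bound apart from citing the earlier work: Remark~\ref{rem-msp-chi} derives $\chi_o(G)=\chi_o(LD(G'))=\chi'_o(G')\leq 7$ from Lemma~\ref{le-vtl}, Observation~\ref{obs-nu-in} and Corollary~\ref{cor-77}. Two cautions if you take that route: use Corollary~\ref{cor-77} in its Sopena-based derivation (Theorem~\ref{sop-bd7} with Observation~\ref{obs-ind-chr}, i.e.\ Corollary~\ref{chri-bd7}), since the paper notes that Corollary~\ref{cor-77} can alternatively be obtained \emph{from} Proposition~\ref{prop-msp-7}, and that direction would be circular; and note that applying Lemma~\ref{le-vtl} requires the fact that every msp-digraph really does arise as the line digraph of an acyclic single-source, single-sink esp-multidigraph (the root digraph), which is the correspondence from Valdes, Tarjan and Lawler that the paper also relies on.
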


\begin{remark}
\label{rem-msp-chi}
We can also bound the oriented chromatic number of a msp-digraph $G$  using
the corresponding root digraph $G'$ which is an esp-digraph 
by
Lemma \ref{le-vtl}.
%
%
%
$$\begin{array}{lclll}
\chi_o(G)&=    & \chi_o(LD(G')) & \text{Lemma }       \ref{le-vtl}   \\
         &=    & \chi'_o(G')    & \text{Observation } \ref{obs-nu-in} \\
         &\leq & 7              & \text{Corollary }   \ref{cor-77}
\end{array}
$$
\end{remark}

For the optimality of the shown bound, we recall from  \cite{GKL20} the following
example.

\begin{example}[\cite{GKL20}]\label{ex-sixa}We consider the
msp-expression\footnote{In all expression we assume that
the series composition  binds more strongly than the parallel composition.}
$$
\begin{array}{lcl}
X_5 &=&v_{1} \times (v_2 \cup v_3 \times (v_4 \cup v_5 \times v_6))\times (v_7 \cup (v_8 \cup v_9 \times v_{10})\times (v_{11} \cup
v_{12} \times v_{13}))\times \\
&&(v_{14}\cup (v_{15} \cup (v_{16} \cup v_{17} \times v_{18})\times (v_{19} \cup v_{20} \times v_{21}))\times  \\
&&(v_{22} \cup(v_{23} \cup v_{24} \times v_{25})\times v_{26}) )\times v_{27}
\end{array}
$$
Since by \cite{GKL20} it holds that $\chi_o(\g(X_5))=7$ the bound of Proposition \ref{prop-msp-7} is best possible.
\end{example}

Using the upper bound on the oriented chromatic number and  the recursive  structure of
msp-digraphs we achieve a linear time solution for
computing the oriented chromatic number of msp-digraphs.

\begin{theorem}[\cite{GKL20,GKL21}]\label{msp-ori-c}
Let $G$ be a msp-digraph. Then, the oriented
chromatic number of $G$ can be computed in linear time.
\end{theorem}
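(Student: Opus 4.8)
Let $G$ be an msp-digraph. Then $\chi_o(G)$ can be computed in linear time.

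**Plan of proof.** The plan is to mirror, for the vertex version, exactly the bottom-up dynamic-programming scheme that was carried out for esp-digraphs in Theorem \ref{esp-num-com}. Since the paper cites this as a known result from \cite{GKL20,GKL21}, I would simply reproduce the argument in the same style as the esp case, using the msp-tree (constructible in $\bigo(n+m)$ time by \cite{VTL82}) as the computational skeleton and Proposition \ref{prop-msp-7} as the boundedness engine that keeps the tables of constant size.

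First I would set up, for an msp-expression $X$ defining a subgraph of $G$, a record analogous to $F(X)$: here I would track triples $(H,L,R)$ where $H$ is a color graph on at most $7$ vertices, $L$ is the multiset (really the set, together with the identifying information the composition needs) of colors assigned to the sources of $\g(X)$, and $R$ the set of colors assigned to the sinks of $\g(X)$. As in the esp case, the crucial modelling decision is to keep \emph{all} feasible partial colorings, not just optimal ones, so that permutations of colors are available when two subexpressions must be glued; this is what makes the recursion sound, and the proof of Lemma \ref{le1} already spells out why. The recursion has three cases: a single vertex $v$, where $F(v)=\{((\{i\},\emptyset),\{i\},\{i\})\mid 1\le i\le 7\}$; a parallel composition $X_1\cup X_2$, where I combine $(H_1,L_1,R_1)\in F(X_1)$ with $(H_2,L_2,R_2)\in F(X_2)$ whenever $H_1+H_2$ (notation from just before Lemma \ref{le1}) is oriented, recording $(H_1+H_2, L_1\cup L_2, R_1\cup R_2)$ since no arcs are added; and a series composition $X_1\times X_2$, where in addition to $H_1+H_2$ being oriented I must check that for every sink $x$ of $\g(X_1)$ and every source $y$ of $\g(X_2)$ the newly created arc $(c(x),c(y))$ lies in $H_1+H_2$ and creates no forbidden configuration — i.e.\ the combined graph is still a legal color graph witnessing an oriented coloring. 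The correctness direction ("every optimal coloring is represented") is the same contradiction argument used after Corollary \ref{cor1}: an optimal coloring of $X_1\times X_2$ restricts to colorings of $X_1$ and $X_2$, which by induction lie in $F(X_1)$ and $F(X_2)$ and are therefore recombined.

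For the running time I would invoke Proposition \ref{prop-msp-7} to bound $|F(X)|$ by $3^{7\cdot 6/2}\cdot 2^{7}\cdot 2^{7}\in\bigo(1)$: there are at most $3^{21}$ oriented graphs on the $7$ colors, and at most $2^7$ choices each for which colors appear among sources and among sinks. Hence each of the $\bigo(n)$ nodes of the msp-tree is processed in constant time — a single-vertex leaf trivially, a composition node by taking the constant-size product of two constant-size tables and filtering — and $\chi_o(G)=\min\{|V(H)| \mid (H,L,R)\in F(X)\}$ is read off at the root. Adding the $\bigo(n+m)$ cost of building the msp-tree gives the claimed linear bound.

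**Expected main obstacle.** The only genuinely delicate point is the bookkeeping at a series composition: in an msp-digraph the series operator adds the \emph{complete} bipartite set of arcs $O_1\times I_2$, so unlike the esp case (where series adds a single identified vertex) one must verify simultaneously for \emph{all} sink–source pairs that no arc $(c(x),c(y))$ is an anti-parallel duplicate and that no two such arcs together with existing arcs form the forbidden "$c(u)=c(y)$ with $c(v)=c(x)$" pattern of Definition \ref{def-oc}. Getting the invariant on $L$ and $R$ strong enough to check this locally — essentially, that the color graph $H$ already records every arc between a source-color and a sink-color that the future environment could force — is where the argument needs care; everything else is a routine transcription of the esp-digraph proof with "arc/line-digraph" replaced by "vertex".
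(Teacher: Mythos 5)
Your proposal takes essentially the same route as the proof the paper relies on (the theorem is only cited here from \cite{GKL20,GKL21}, and that proof, like the paper's in-house analogues in Lemma~\ref{le1}/Theorem~\ref{esp-num-com} and Lemma~\ref{le-e-msp}/Theorem~\ref{msp-ori-a-c}, is exactly your bottom-up dynamic program over the msp-tree storing constant-size sets of triples $(H,L,R)$ of a color graph on at most $7$ colors together with the source- and sink-color sets, with the table size bounded via Proposition~\ref{prop-msp-7}). One wording slip to fix: at a series composition the new color-arcs $R_1\times L_2$ are \emph{added} to $H_1+H_2$ and the enlarged graph is then required to be oriented (with $L=L_1$ and $R=R_2$ afterwards); they need not already lie in $H_1+H_2$ --- your closing paragraph shows you intend exactly this, so the argument is sound.
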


\subsection{Oriented Arc-Colorings of Minimal Vertex Series-Parallel Digraphs}

By Proposition \ref{prop-msp-7} and Observation \ref{obs-ind-chr}
we know the following bound on the oriented chromatic index of msp-digraphs.

\begin{corollary}\label{cor-msp-ind}
Let $G$ be an msp-digraph. Then, it holds that $\chi'_o(G)\leq 7$.
\end{corollary}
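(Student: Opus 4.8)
The statement to prove is Corollary~\ref{cor-msp-ind}: every msp-digraph $G$ satisfies $\chi'_o(G)\leq 7$. The corollary is explicitly flagged as a consequence of Proposition~\ref{prop-msp-7} (every msp-digraph has oriented chromatic number at most $7$) together with Observation~\ref{obs-ind-chr} ($\chi'_o(H)\leq\chi_o(H)$ for every oriented graph~$H$), so the proof is a one-line deduction rather than a genuine combinatorial argument. The plan is simply to instantiate Observation~\ref{obs-ind-chr} with $H=G$ and then bound the right-hand side using Proposition~\ref{prop-msp-7}.

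\begin{proof}
Let $G$ be an msp-digraph. By Observation~\ref{obs-ind-chr}, $\chi'_o(G)\leq\chi_o(G)$, and by Proposition~\ref{prop-msp-7} we have $\chi_o(G)\leq 7$. Combining the two inequalities yields $\chi'_o(G)\leq 7$.
\end{proof}

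There is essentially no obstacle here: both ingredients are already established in the excerpt, and the only thing to check is that an msp-digraph is indeed an oriented graph (it is, being a DAG with no opposite arcs by Definition~\ref{def-msp}), so that Observation~\ref{obs-ind-chr} applies. One could alternatively route the bound through the line digraph, writing $\chi'_o(G)=\chi_o(LD(G))$ via Observation~\ref{obs-nu-in} and then bounding $\chi_o(LD(G))$; but since $LD(G)$ need not be an msp-digraph, this would require more work than the direct route, so I would not pursue it. The sharpness of this bound is not claimed in the corollary itself and is instead handled separately by Example~\ref{ex-msp-6}.
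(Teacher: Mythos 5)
Your proof is correct and matches the paper's own derivation exactly: the corollary is stated there as an immediate consequence of Proposition~\ref{prop-msp-7} combined with Observation~\ref{obs-ind-chr}, which is precisely your one-line argument. Nothing further is needed.
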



For the optimality of the shown bound, we next give an example.

\begin{example}\label{ex-msp-6}
We recursively define msp-expressions $Y_i$ as follows. $Y_0$ defines a
single vertex graph and for $i\geq 1$
we define
$$Y_i= (Y_0 \cup Y_{i-1} \times Y_{i-1})$$
in order to define
$$X_6= Y_0 \times Y_0 \times Y_6 \times Y_0 \times Y_0.$$
$\gr(X_6)$ has 131 vertices and
satisfies $\chi'_o(\g(X_6))=7$, which was
found by a binary integer program (Remark \ref{rem-lp-oci}) using Gorubipy.
This implies that the bound of Corollary \ref{cor-msp-ind} is best possible.
\end{example}

%
%


%
%
%

In order to compute the  oriented
chromatic index of an msp-digraph $G=(V,E)$ defined by an msp-expression $X$,
we apply Observation \ref{obs-nu-in} which allows us to 
compute  the oriented chromatic number of $LD(G)$. This is done by computing 
triples $(H, \mathcal{L}, \mathcal{R})$, which are defined as follows.
Here $H = (V_H,E_H)$ is a color graph of $LD(G)$ with $V_H \subseteq \{1, \dots, 7\}$. At first we do not care whether the color graphs are oriented or even contain loops. At the end we then check which is the color graph with the fewest vertices that is oriented. 

The  color graph of $LD(G)$ for the parallel composition  $G = G_1 \cup G_2$ can easily be 
obtained from the color graphs $H_1=(V_1,E_1)$ of $LD(G_1)$ and  $H_2=(V_2,E_2)$ of $LD(G_2)$
by $H_1+H_2:=(V_1\cup V_2,E_1\cup E_2)$.

\begin{example}\label{ex-h-d}
In Figure \ref{lmspi_cup} the construction of a color graph of $LD(G)$ for $G = G_1 \cup G_2$ is illustrated.
\end{example}

\begin{figure}[ht]
\begin{center}
\includegraphics[width=0.9\textwidth]{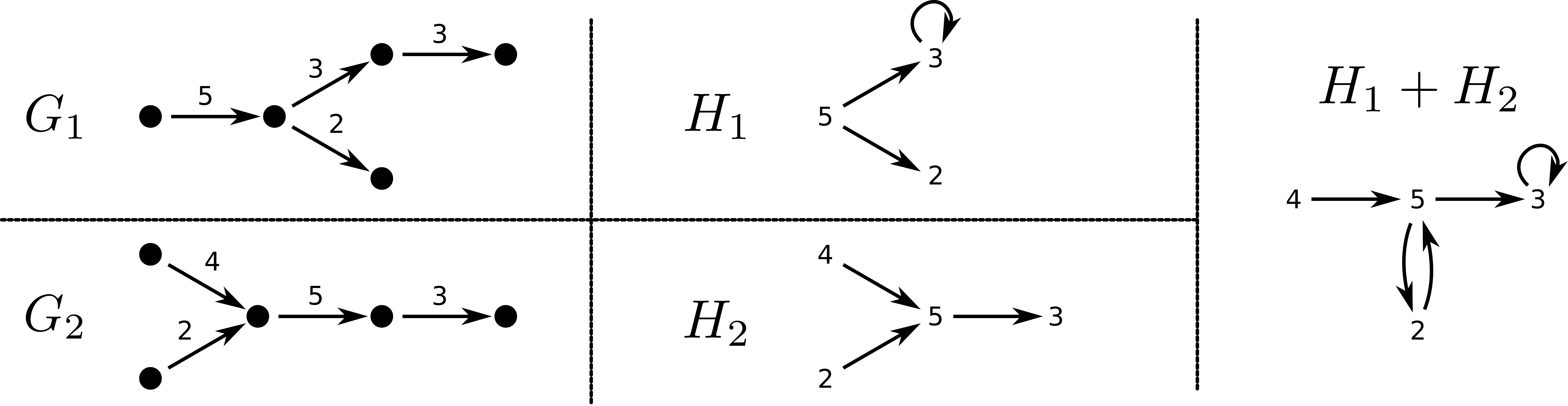}
\caption{$H_1$ is a color graph of $LD(G_1)$ and $H_2$ is a color graph of $LD(G_2)$, so $H_1 + H_2$ is a color graph of $LD(G_1 \cup G_2)$.}
\label{lmspi_cup}
\end{center}
\end{figure} 

The  color graph of $LD(G)$ for the series composition  $G = G_1 \times G_2$ obviously has new vertices 
which correspond to the new inserted edges by the series composition in $G$.

\begin{example}\label{ex-s1}
We consider  the series composition  $G = G_1 \times G_2$  in Figure \ref{lmspi_times}.
If $H_1$ is the color graph of the coloring of $LD(G_1)$  and $H_2$ is the color graph of the coloring of $LD(G_2)$, then we get several possible color graphs depending on how $x_1$, $x_2$, $y_1$ and $y_2$ are colored.

The color graphs are of the form $H_1 + H_2$ with additional vertices for the edges $(1,x_i)$, $(2,x_i)$, $(x_i,3)$, $(1,y_i)$, $(2,y_i)$ and $(y_i,4)$ with $i=1$ or $2$.

So in order to determine all possible color graphs of $G_1 \times G_2$, we have to know that $G_1$ has two sinks with leading edges colored with 1 and 2 and $G_2$ has one source with one leading edge colored with 3 and one source with one leading edge colored with 4.

As a first simplification, we can assume, without loss of generality, that $x_1 = x_2$ and $y_1 = y_2$. This applies because every oriented coloring with $x_1 \neq x_2$ also remains oriented when $x_1 = x_2$. So we do not need to know that $G_1$ has two sinks, which have incoming edges colored with 1 and 2, it is enough to know that $G_1$ has at least one such sink. 
\end{example}

\begin{figure}[ht]
\begin{center}
\includegraphics[width=0.3\textwidth]{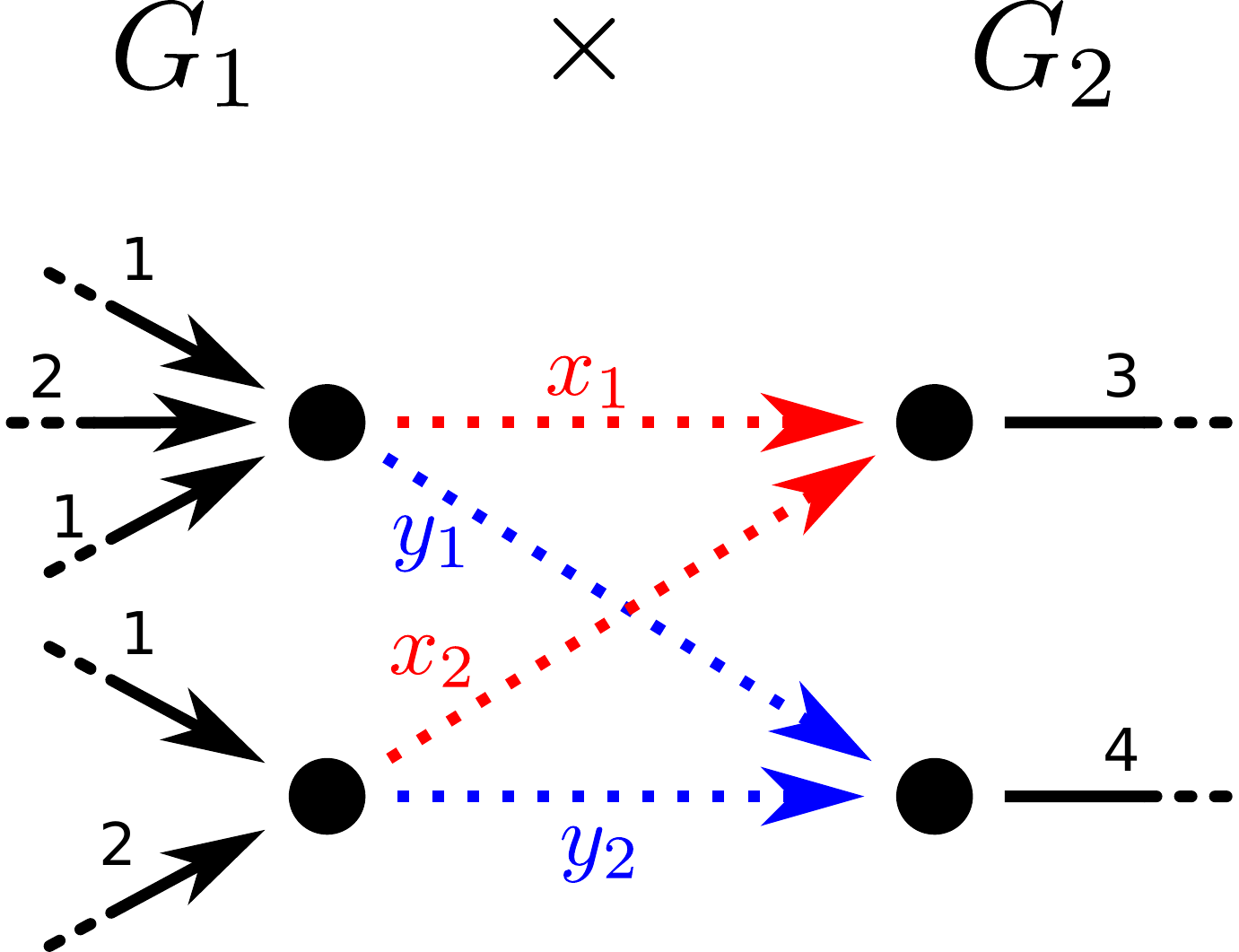}
\caption{Example of a series composition $G = G_1 \times G_2$ }
\label{lmspi_times}
\end{center}
\end{figure} 

The example shows that we need the colors of the incoming edges of sinks and the colors of the outgoing edges of sources in $G$.
In order to store these informations let $\mathcal{L}\subseteq \mathcal{P}(\{1,\dots,7\})$ 
be the set of the sets $L$ such that there is a source $\ell$ in $G$, such that $L$ is the set of outgoing edge colors in $\ell$ 
and $\mathcal{R}\subseteq \mathcal{P}(\{1,\dots,7\})$ be the set of the sets $R$ such that there is a sink $r$ in $G$, such that $R$ is the set of outgoing edge colors in $r$.

Assume we know the triples $((V_1,E_1),\mathcal{L}_1,\mathcal{R}_1)$ for $G_1$ and 
the triples  $((V_2,E_2),\mathcal{L}_2,\mathcal{R}_2)$ for $G_2$. For every
$R \in \mathcal{R}_1$ and every $L \in \mathcal{L}_2$ we define a new vertex  $u_{R,L}$ in $H$, which 
represents the color of all new edges that go from a sink $r$ of $G_1$, whose incoming edge colors are $R$, to a source $\ell$ of $G_2$, whose outgoing edge colors are $L$.

\begin{example}
In Figure \ref{lmspi_times} the color of the red edges would be $u_{\{1,2\},\{3\}}$ and the color of the blue edges $u_{\{1,2\}, \{4\}}$, representing two new vertices in the color graph $H$. 
\end{example}

For a color graph $H$ of $LD(G_1 \times G_2)$ we have the edges of $H_1$ and $H_2$ and additionally the edges $\bigcup_{R_i \in \mathcal{R}_1}(R_i \times \{u_{R_i,L_j} \mid L_j \in \mathcal{L}_2\})$ and $\bigcup_{L_j \in \mathcal{L}_2}(\{u_{R_i,L_j} \mid R_i \in \mathcal{R}_1\} \times L_j)$.

%

\begin{example}
In Figure \ref{lmspi_times} the additional edges in color graph $H$ are $\{1,2\} \times \{u_{\{1,2\},\{3\}}, u_{\{1,2\},\{4\}}\}$ and $\{u_{\{1,2\},\{3\}}\} \times \{3\} \cup \{u_{\{1,2\},\{4\}}\} \times \{4\}$.
\end{example}

To obtain  the new set $\mathcal{L}$ for $\g(X_1 \times X_2)$ we have to distinguish between 
the following two cases. If $\emptyset \not \in \mathcal{L}_1$, i.e. $\g(X_1)$ has no isolated vertices, then $\mathcal{L} = \mathcal{L}_1$ applies. If $\emptyset \in \mathcal{L}_1$, then $\mathcal{L} = (\mathcal{L}_1 - \{\emptyset\}) \cup \{\{u_{\emptyset,L_j} \mid L_j \in \mathcal{L}_2\}\}$. The same can be concluded for the new 
set $\mathcal{R}$ for $\g(X_1 \times X_2)$. Thus, $\mathcal{R}$ and $\mathcal{L}$ can easily be constructed together with the color graph.

We store all these triples $(H, \mathcal{L}, \mathcal{R})$ in $F(X)$. 
In order to bound the size of $F(X)$ we recall that
the number of vertex labeled, i.e., the vertices are distinguishable
from each other, oriented graphs  on $n$
vertices is  $3^{\nicefrac{n(n-1)}{2}}$.
By  Corollary \ref{cor-msp-ind} we can conclude that
$$|F(X)|\leq 3^{\nicefrac{7(7-1)}{2}}\cdot 2^{2^7} \cdot 2^{2^7}\in \bigo(1)$$
which is independent of the size of $G$.

\begin{lemma}\label{le-e-msp}
\begin{enumerate}
\item For every $v\in V$ it holds that
\[F(v) = \{((\emptyset,\emptyset),\{\emptyset\},\{\emptyset\})\}.\]

\item Let $X = X_1 \cup X_2 $, then it holds that
$$
F(X) =  \left\lbrace (H_1 + H_2, \mathcal{L}_1 \cup \mathcal {L}_2, \mathcal {R}_1 \cup \mathcal {R}_2)  
\ \middle|    
\begin{array}{l}
(H_1, \mathcal {L}_1, \mathcal {R}_1) \in F (X_1), \\
(H_2, \mathcal {L}_2, \mathcal {R}_2) \in F (X_2) 
\end{array} 
\right\rbrace.
$$

\item Let $X = X_1 \times X_2 $, then it holds that
\[F(X) = \left\lbrace
(H , \mathcal{L}, \mathcal{R}) 
\ \middle| 
\begin{array}{l}
((V_1,E_1),\mathcal{L}_1,\mathcal{R}_1) \in F(X_1),\\ 
((V_2,E_2),\mathcal{L}_2,\mathcal{R}_2) \in F(X_2),\\
u_{R_i,L_j} \in \{1, \dots, 7\} \  \forall R_i \in \mathcal{R}_1, L_j \in \mathcal{L}_2, \text{ where}\\ 
U_{1,R_i} = \{u_{R_i,L_j} \mid L_j \in \mathcal{L}_2\},  \\ 
U_{2,L_j} = \{u_{R_i,L_j} \mid R_i \in \mathcal{R}_1\},  \\ 
U = \{u_{R_i,L_j} \mid R_i \in \mathcal{R}_1, L_j \in \mathcal{R}_2\},  \\ 
\mathcal{L} = \begin{cases}
\mathcal{L}_1 &\text{if } \emptyset \notin \mathcal{L}_1\\ 
(\mathcal{L}_1 - \{\emptyset\}) \cup \{U_{1,\emptyset}\} &\text{if } \emptyset \in \mathcal{L}_1
\end{cases},\\
\mathcal{R} = \begin{cases}
\mathcal{R}_2 &\text{if } \emptyset \notin \mathcal{R}_2\\ 
(\mathcal{R}_2 - \{\emptyset\}) \cup \{U_{2,\emptyset}\} &\text{if } \emptyset \in \mathcal{R}_2
\end{cases},\\
H = (V, E),\\
V = V_1 \cup V_2 \cup U,\\
E = E_1\cup E_2 \underset{R_i \in \mathcal{R}_1}{\bigcup} (R_i \times U_{1,R_i}) \underset{L_j \in \mathcal{L}_2}{\bigcup} (U_{2,L_j} \times L_j)
\end{array}  \right\rbrace.\]
\end{enumerate}
\end{lemma}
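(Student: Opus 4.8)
Lemma \ref{le-e-msp} is a correctness statement for the recursive computation of $F(X)$, so the proof is a structural induction on the msp-expression $X$, with one case for each of the three building rules of Definition \ref{def-msp}. For each case I would argue the two inclusions of the claimed set equality separately: (a) every triple $(H,\mathcal{L},\mathcal{R})$ produced by the formula really is a valid ``state'' of some coloring of $LD(\g(X))$ — i.e.\ $H$ is a color graph of $LD(\g(X))$ (ignoring for now whether it is oriented, as the surrounding text explains), $\mathcal{L}$ is exactly the family of outgoing-edge-color-sets over all sources of $\g(X)$, and $\mathcal{R}$ the analogous family over all sinks; and (b) conversely, every coloring of $LD(\g(X))$ gives rise, via its induced color graph and its source/sink color data, to a triple that the formula generates. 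The key external facts I would lean on are Observation \ref{obs-nu-in} (so that arc-colorings of $\g(X)$ are vertex-colorings of $LD(\g(X))$) and the recursive description of $LD$ under parallel and series composition that was developed in the running text just before the lemma (the ``$H_1+H_2$'' description for $\cup$, and the ``new vertex $u_{R,L}$ per sink-class $\times$ source-class'' description for $\times$).

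\textbf{Base case and the parallel case.} For a single vertex $v$, $\g(v)$ has no arcs, so $LD(\g(v))$ is the empty digraph, its only color graph is $(\emptyset,\emptyset)$, and $v$ is simultaneously a source and a sink with empty outgoing-edge-color-set; hence $F(v)=\{((\emptyset,\emptyset),\{\emptyset\},\{\emptyset\})\}$, matching part~1. For $X=X_1\cup X_2$, the arc set of $\g(X)$ is the disjoint union of those of $\g(X_1)$ and $\g(X_2)$ and no arc of one part is adjacent (head-to-tail) to an arc of the other, so $LD(\g(X))$ is the disjoint union $LD(\g(X_1))\,\dot\cup\,LD(\g(X_2))$; a coloring of the union is just a pair of colorings of the parts, its induced color graph is $H_1+H_2$, and the source set (resp.\ sink set) of $\g(X)$ is the union of the source sets (resp.\ sink sets) of the two parts, giving $\mathcal{L}_1\cup\mathcal{L}_2$ and $\mathcal{R}_1\cup\mathcal{R}_2$. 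Both inclusions are then immediate, which is part~2. Throughout I would note explicitly that we retain \emph{all} color graphs, not just oriented ones — this is what makes the inductive bookkeeping closed, and orientedness is only imposed at the very end (consistent with the discussion preceding the lemma and with the earlier esp-analogue, Lemma \ref{le1} and Corollary \ref{cor1}).

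\textbf{The series case — the main obstacle.} The real work is part~3. Under $G=G_1\times G_2$ the new arcs of $\g(X)$ are exactly those from a sink of $G_1$ to a source of $G_2$; in $LD(\g(X))$ each such new arc $e=(r,\ell)$ becomes a new vertex, with in-neighbours the line-digraph vertices corresponding to arcs entering $r$ (whose colors form the set $R$ associated to that sink) and out-neighbours those corresponding to arcs leaving $\ell$ (colors forming the set $L$). The crucial reduction, already isolated in Example \ref{ex-s1}, is that all new arcs sharing the same pair $(R,L)$ of endpoint-color-sets may be given a common color without loss of generality (merging ``$x_1=x_2$''), so a single new color-graph vertex $u_{R,L}$ per pair $(R,L)\in\mathcal{R}_1\times\mathcal{L}_2$ suffices; I would prove this ``merging is WLOG'' claim as a small sublemma, checking both arc-coloring conditions of Definition \ref{def-oac} survive the identification. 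Granting that, a coloring of $LD(\g(X))$ decomposes into: its restriction to $LD(\g(G_1))$ (giving some $(H_1,\mathcal{L}_1,\mathcal{R}_1)\in F(X_1)$ by induction), its restriction to $LD(\g(G_2))$ (giving $(H_2,\mathcal{L}_2,\mathcal{R}_2)\in F(X_2)$), and a choice of color $u_{R_i,L_j}\in\{1,\dots,7\}$ for each pair — which is precisely the data ranged over in the displayed formula. The color graph is then $H_1+H_2$ together with the edges $R_i\times U_{1,R_i}$ (arcs of $LD$ from an old arc into $r$, to the new arc $e$) and $U_{2,L_j}\times L_j$ (from $e$ to an old arc out of $\ell$), matching $E$. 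The last fiddly point is updating $\mathcal{L}$ and $\mathcal{R}$: in $\g(X_1\times X_2)$ the sources are those of $\g(X_1)$, except that an \emph{isolated} vertex of $\g(X_1)$ (an empty outgoing-color-set, i.e.\ $\emptyset\in\mathcal{L}_1$) is no longer isolated — it has become a source whose outgoing arcs are exactly the new arcs to the sources of $G_2$, hence its new color-set is $U_{1,\emptyset}$; this is the two-case definition of $\mathcal{L}$, and symmetrically for $\mathcal{R}$ with the sinks of $G_2$. Verifying that these updated families are \emph{exactly} the source/sink color-data of $\g(X)$ — neither too large nor too small — and that the forward and backward constructions are mutually inverse, is where essentially all the care goes; everything else is routine unwinding of the definitions of $LD$ and of series/parallel composition.
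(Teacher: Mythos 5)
Your proposal is correct and follows essentially the same route as the paper: the paper states Lemma~\ref{le-e-msp} without a separate formal proof, relying on the construction developed in the text immediately before it (the $H_1+H_2$ description for the parallel case, the new vertices $u_{R,L}$ and the without-loss-of-generality merging of Example~\ref{ex-s1} for the series case, and the $\emptyset$-case update of $\mathcal{L}$ and $\mathcal{R}$ for formerly isolated vertices), and your structural induction is precisely a formalization of that discussion. The points you single out for extra care (that merging preserves both conditions of Definition~\ref{def-oac}, and that the restriction/extension constructions match up) are exactly the facts the paper's informal treatment takes for granted.
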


\begin{corollary}\label{cor1-ind}
There is an oriented edge $r$-coloring for an msp-digraph $G$
which is given by an msp-expression $X$ if and only if there is some $(H, \mathcal{L}, \mathcal{R}) \in F(X)$
such that color graph $H$ has $r$ vertices and is oriented.
Therefore, $\chi_o'(G) = \min \{|V_H| \mid ((V_H,E_H), \mathcal{L}, \mathcal{R}) \in F(X) \text{ and } (V_H,E_H) \text{ is oriented}\}$.
\end{corollary}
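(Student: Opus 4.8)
The plan is to reduce the corollary to a structural description of the sets $F(X)$ produced by Lemma~\ref{le-e-msp}, combined with Observation~\ref{obs-nu-in}. First I would recall that $\chi'_o(G)=\chi_o(LD(G))$ and that an oriented $r$-arc-coloring of $G$ is exactly a homomorphism from $LD(G)$ onto an oriented graph on $r$ vertices. To link this with $F(X)$, to every arc-coloring $c$ of $G$ with colors in $\{1,\dots,7\}$ I would associate its \emph{color graph} $H(c)$: the vertices of $H(c)$ are the colors actually used by $c$, and $H(c)$ has an arc $(c(e),c(e'))$ for every directed path of length two $e,e'$ in $G$, i.e.\ for every arc $(e,e')$ of $LD(G)$. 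Checking the two conditions of Definition~\ref{def-oac} in turn shows that $c$ is an oriented arc-coloring of $G$ precisely when $H(c)$ has no loop (first condition) and no pair of opposite arcs (second condition), that is, precisely when $H(c)$ is an oriented graph; in that case $e\mapsto c(e)$ is a homomorphism $LD(G)\to H(c)$, so $c$ is an oriented $|V_{H(c)}|$-arc-coloring and $H(c)$ is a legal color graph of $LD(G)$. Hence everything comes down to showing that the triples in $F(X)$ are exactly the triples $(H(c),\mathcal{L}_c,\mathcal{R}_c)$, where $c$ ranges over the arc-colorings of $G$ with colors in $\{1,\dots,7\}$ (up to the harmless normalization at series compositions made explicit below), $\mathcal{L}_c$ is the family of out-color sets of the sources of $G$, and $\mathcal{R}_c$ the family of out-color sets of its sinks.

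I would prove this description by induction on the structure of the msp-expression $X$, feeding each case to the corresponding part of Lemma~\ref{le-e-msp}. For a single vertex there are no arcs, so $H(c)$ is the empty digraph and the unique vertex is simultaneously source and sink with empty out-color set, giving the triple $((\emptyset,\emptyset),\{\emptyset\},\{\emptyset\})$. For a parallel composition $X=X_1\cup X_2$ the digraph $\g(X)$ is the vertex-disjoint union of $\g(X_1)$ and $\g(X_2)$ with no new arcs, hence $LD(\g(X))$ is the disjoint union of the two line digraphs and the sources (resp.\ sinks) of $\g(X)$ are those of the two parts; so an arc-coloring of $\g(X)$ is just a pair of arc-colorings of the parts, its color graph is $H_1+H_2$, and its source/sink families are $\mathcal{L}_1\cup\mathcal{L}_2$ and $\mathcal{R}_1\cup\mathcal{R}_2$, which is exactly part~2 of the lemma. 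For a series composition $X=X_1\times X_2$ the only new arcs are those of $O_1\times I_2$ joining the sinks of $\g(X_1)$ to the sources of $\g(X_2)$; an arc-coloring of $\g(X)$ is a pair of arc-colorings of the parts together with a coloring of these new arcs, and any two new arcs that leave sinks of $\g(X_1)$ with the same incoming-color set $R$ and enter sources of $\g(X_2)$ with the same outgoing-color set $L$ have the same color constraints and the same color-neighborhoods in $LD(\g(X))$, so---as explained in the text preceding the lemma---it is without loss of generality to give all of them one common color $u_{R,L}$. The color graph then has the vertices of $H_1$ and $H_2$ together with one new vertex per type and, besides the arcs of $H_1$ and $H_2$, exactly the arcs $R_i\times U_{1,R_i}$ and $U_{2,L_j}\times L_j$; and $\mathcal{L},\mathcal{R}$ are obtained from $\mathcal{L}_1,\mathcal{R}_2$ by replacing an isolated source's empty out-color set by $U_{1,\emptyset}$ (and dually for an isolated sink), since the sources of $\g(X)$ are those of $\g(X_1)$ and its sinks are those of $\g(X_2)$. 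This is exactly part~3 of the lemma. As in the proof of Lemma~\ref{le1}, it is essential here that $F(X_1)$ and $F(X_2)$ keep \emph{all} colorings, including those differing only by a permutation of the colors, so that the matching conditions concealed in $H_1+H_2$ and in the identification of the $R$- and $L$-sets discard no feasible solution.

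Granting this description, the corollary follows. If $(H,\mathcal{L},\mathcal{R})\in F(X)$ with $H$ oriented and $|V_H|=r$, then $H=H(c)$ for an arc-coloring $c$ of $G$, and by the first paragraph $c$ is an oriented $r$-arc-coloring of $G$. Conversely, an oriented $r$-arc-coloring of $G$ is a homomorphism onto an oriented graph on $r$ vertices; after restricting to its image and normalizing as above (which only identifies colors and adds no arc to the color graph) we obtain an arc-coloring $c$ with $H(c)$ oriented, so $(H(c),\mathcal{L}_c,\mathcal{R}_c)\in F(X)$. Finally, $\chi'_o(G)\le 7$ by Corollary~\ref{cor-msp-ind}, so an optimal oriented arc-coloring uses at most $7$ colors and is therefore represented by a triple in $F(X)$; taking the minimum of $|V_H|$ over those triples of $F(X)$ whose color graph is oriented yields $\chi'_o(G)=\min\{|V_H|\mid((V_H,E_H),\mathcal{L},\mathcal{R})\in F(X)\text{ and }(V_H,E_H)\text{ is oriented}\}$.

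The step I expect to be the main obstacle is the series-composition case of the induction: one has to argue carefully that recording only the \emph{families} $\mathcal{L}$ and $\mathcal{R}$ of out-color sets at sources and sinks---not which vertex carries which set, nor any correlation between the two families---retains everything needed when the subexpression is embedded into further compositions, and that collapsing all new arcs of a given type $(R,L)$ to one color $u_{R,L}$ loses nothing. The latter is true because such arcs have identical color-neighborhoods in the line digraph, so identifying their colors adds no arc to the color graph (hence cannot destroy orientedness) and can only decrease the number of colors used; turning this into a clean invariant, together with the precise bookkeeping of how $\mathcal{L}$ and $\mathcal{R}$ change under a series composition, is where the genuine work lies. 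The color-graph reformulation of Definition~\ref{def-oac} in the first step and the final optimization in the third are routine.
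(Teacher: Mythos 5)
Your proposal is correct and takes essentially the same route as the paper, which states Corollary \ref{cor1-ind} (and Lemma \ref{le-e-msp}) without a formal proof and relies on exactly the development you reconstruct: the equivalence that an arc-coloring $c$ is oriented iff its color graph for $LD(G)$ (Observation \ref{obs-nu-in}) is an oriented graph, a structural induction matching the three cases of Lemma \ref{le-e-msp} with the $u_{R,L}$-normalization at series compositions, and Corollary \ref{cor-msp-ind} to cap the search at $7$ colors. The normalization/interface subtlety you flag as the main obstacle is real but resolvable, and is precisely the point the paper itself dispatches with a single ``without loss of generality'' sentence in Example \ref{ex-s1}, so your treatment is, if anything, more explicit than the paper's.
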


\begin{theorem}\label{msp-ori-a-c}
Let $G$ be a msp-digraph. Then, the oriented
chromatic index of $G$ can be computed in linear time.
\end{theorem}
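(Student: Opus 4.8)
The plan is to mirror the structure of the proof of Theorem \ref{esp-num-com}, but now using the recursive description of the triples $F(X)$ given in Lemma \ref{le-e-msp} instead of the simpler triples of Lemma \ref{le1}. First I would fix an msp-expression $X$ for $G=(V,E)$ together with an msp-tree $T$ for $G$, which can be computed in $\bigo(n+m)$ time from $G$ by Lemma \ref{le-vtl}'s underlying construction (see \cite{VTL82}). Then I would traverse $T$ in bottom-up order and, at each node $u$ with sub-expression $X_u$, compute the set $F(X_u)$ from the sets already computed at the children of $u$, applying the appropriate case of Lemma \ref{le-e-msp}: the base case for a single-vertex leaf, the union rule for a $\cup$-node, and the (more involved) series rule for a $\times$-node. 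At the root we obtain $F(X)$, and by Corollary \ref{cor1-ind} the value $\chi'_o(G)$ is read off as the minimum number of vertices among all color graphs $H$ occurring in a triple of $F(X)$ that are oriented (i.e.\ have no loops and no pair of opposite arcs).

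The key quantitative point is that each individual set-update is doable in constant time. By Corollary \ref{cor-msp-ind} every color graph in play has at most $7$ vertices, so a color graph is an element of a fixed finite set (there are at most $3^{\binom{7}{2}}$ vertex-labeled oriented graphs on the label set $\{1,\dots,7\}$, and even allowing loops and opposite arcs the number is still bounded by a constant), and each of $\mathcal L,\mathcal R$ is a subset of $\mathcal P(\{1,\dots,7\})$, so there are at most $2^{2^7}$ possibilities for each. Consequently $|F(X_u)|\le 3^{\nicefrac{7\cdot 6}{2}}\cdot 2^{2^7}\cdot 2^{2^7}\in\bigo(1)$ for every node $u$, as already observed before Lemma \ref{le-e-msp}. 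For a $\cup$-node the update enumerates all pairs $(H_1,\mathcal L_1,\mathcal R_1)\in F(X_1)$ and $(H_2,\mathcal L_2,\mathcal R_2)\in F(X_2)$ and forms $(H_1+H_2,\mathcal L_1\cup\mathcal L_2,\mathcal R_1\cup\mathcal R_2)$; this is $\bigo(1)$ many pairs, each processed in $\bigo(1)$ time. For a $\times$-node the update additionally enumerates, for each pair of triples, all assignments of colors $u_{R_i,L_j}\in\{1,\dots,7\}$ to the new vertices indexed by $\mathcal R_1\times\mathcal L_2$; since $|\mathcal R_1|,|\mathcal L_2|\le 2^7$ this is again a constant number of choices, and for each choice the new color graph $H=(V,E)$ together with the new sets $\mathcal L,\mathcal R$ is assembled in $\bigo(1)$ time following the formula in Lemma \ref{le-e-msp}(3).

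Putting these together: $T$ has $n$ leaves and $n-1$ inner nodes, each node is handled in $\bigo(1)$ time, the esp-/msp-tree is built in $\bigo(n+m)$ time, and the final scan of $F(X)$ for the smallest oriented color graph is $\bigo(1)$; hence the total running time is $\bigo(n+m)$, which is linear. Correctness of the output value follows directly from Corollary \ref{cor1-ind} once we know that the sets $F(X_u)$ computed by the algorithm coincide with those defined in Lemma \ref{le-e-msp} — which is exactly what the recursive formulas in that lemma assert. The main obstacle, and the only part that needs genuine care rather than bookkeeping, is the series step: one must be confident that the triple really does capture all the information about $G_1$ and $G_2$ needed to reconstruct every oriented coloring of $LD(G_1\times G_2)$, in particular the reduction (justified in Example \ref{ex-s1}) that it suffices to record, for each source, only the \emph{set} of outgoing-edge colors rather than a multiset, and likewise for sinks — so that merging two sinks of $G_1$ with the same incoming-color set, or collapsing $x_1=x_2$, never loses an oriented coloring. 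Granting Lemma \ref{le-e-msp} and Corollary \ref{cor1-ind}, which we may assume, this is already established, and the theorem follows.
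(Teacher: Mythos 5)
Your proposal is correct and follows essentially the same route as the paper's proof: build the msp-tree in $\bigo(n+m)$ time via \cite{VTL82}, traverse it bottom-up computing each $F(X_u)$ by Lemma \ref{le-e-msp} in $\bigo(1)$ time per node (since the number of triples $(H,\mathcal{L},\mathcal{R})$ is bounded by the constant $3^{\nicefrac{7\cdot 6}{2}}\cdot 2^{2^7}\cdot 2^{2^7}$), and read off $\chi'_o(G)$ at the root via Corollary \ref{cor1-ind}. Your additional remarks on the series-node enumeration and the minor attribution of the tree construction to Lemma \ref{le-vtl} rather than directly to \cite{VTL82} do not change the argument.
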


\begin{proof}
Let $G=(V,E)$ be an msp-digraph with $n=|V|$ vertices and $m=|E|$ edges and let  $T$ be an msp-tree for $G$
with root $r$. For a vertex $u$ of $T$ we denote by $T_u$
the subtree rooted at $u$ and by $X_u$ the esp-expression defined by $T_u$.

For computing the oriented chromatic index for an msp-digraph $G$, we traverse msp-tree $T$ in bottom-up order.
For every vertex $u$ of $T$ we can compute $F(X_u)$ by
following the rules given in Lemma \ref{le-e-msp}. By Corollary \ref{cor1-ind} we can solve our
problem using $F(X_r)=F(X)$.

An msp-tree $T$ can be computed in $\bigo(n+m)$ time from $G$, see \cite{VTL82}.
By Lemma \ref{le-e-msp} we obtain the following running times.
\begin{itemize}
\item
For every vertex $v\in V$ set  $F(v)$ is computable in $\bigo(1)$ time.

\item
For every two msp-expressions $X_1$ and $X_2$  set
$F(X_1  \cup X_2)$  and   set
$F(X_1  \times X_2)$  can be computed in  $\bigo(1)$ time from
$F(X_1)$ and $F(X_2)$.
This is true since there are only a constant number of color graphs with nodes from $\{1, \ldots, 7\}$, just as there are only a limited number of different $\mathcal{L}$ and $\mathcal{R}$ and so the input size  is constant.

%
\end{itemize}

Since $T$ consists of $n$ leaves and $n-1$ inner vertices,
the overall running time is in $\bigo(n+m)$. 
\end{proof}

\section{Conclusions and outlook}

In this paper we showed
tight upper bounds for the oriented chromatic number
and the oriented chromatic index of edge series-parallel digraphs and 
minimal series-parallel digraphs.
Furthermore, we introduced  linear time solutions for 
computing the  oriented chromatic number of edge series-parallel digraphs and the  oriented chromatic index of minimal series-parallel digraphs.

%

The existence of graph classes of arbitrary large vertex degree but bounded
oriented chromatic index, such as msp-digraph and esp-digraphs,
implies that Vizings Theorem \cite{Viz64} can not
be carried over to the oriented chromatic index.

In our future work we want to analyze
the existence of polynomial time
algorithms for computing  
the oriented chromatic index and  oriented chromatic
number of orientations of  series-parallel graphs efficiently
which would lead to generalizations of
Theorem \ref{esp-index-com} and Theorem \ref{esp-num-com}.

Furthermore, it remains to extend the results for $\OCN$ and $\OCI$ to graphs of bounded
directed clique-width. As a starting point we considered the parameterized complexity of $\OCN_r$ parameterized
by directed clique-width  in \cite{GKL21a}.


\section*{Acknowledgements} \label{sec-a}

The work of the second and third author was supported
by the Deutsche
Forschungsgemeinschaft (DFG, German Research Foundation) -- 388221852




\end{document}